\documentclass[pra,aps,twocolumn,superscriptaddress,longbibliography]{revtex4-1}
\usepackage[colorlinks=true, citecolor=red, urlcolor=blue ]{hyperref}
\usepackage{graphicx}
\usepackage{bm}
\usepackage{amsmath,amsfonts}
\usepackage{amsthm}

\usepackage{hyperref} 
\usepackage{color}
\usepackage{xcolor}
\hypersetup{
    urlcolor=magenta,
    citecolor=blue
           }

\usepackage{braket}
\theoremstyle{plain}
\usepackage{amssymb}
\usepackage{amsthm}
\usepackage{amsfonts}
\usepackage{float}
\usepackage{tabularx}

\usepackage{mathtools}
\usepackage{esvect}
\usepackage{wrapfig}
\usepackage{amsthm}
\usepackage{verbatim}
\usepackage{bbm}
\usepackage[normalem]{ulem}

\usepackage{enumitem}
\usepackage{fmtcount}
\usepackage{booktabs}
\usepackage{csquotes}
\usepackage{epsfig}

\usepackage{tabularx}
\usepackage{graphicx}
\usepackage[utf8x]{inputenc}
\usepackage{color,soul}
\usepackage{amsmath}
\usepackage{braket}
\usepackage{latexsym}
\usepackage{bm}
\usepackage{graphics,epstopdf}
\usepackage{enumitem}
\usepackage{fmtcount}
\usepackage{booktabs}
\usepackage{epsfig}

\theoremstyle{plain}

\def\bea{\begin{eqnarray}}
\def\eea{\end{eqnarray}}
\def\ba{\begin{array}}
\def\ea{\end{array}}

\def\beq{\begin{equation}}
\def\eeq{\end{equation}}

\usepackage[normalem]{ulem}
\usepackage{float}
\usepackage{dcolumn}          
\usepackage{amssymb}
\usepackage{appendix}
\usepackage{physics}   
\usepackage{mathtools}
\usepackage{esvect}
\usepackage{wrapfig}
\usepackage{amsthm}
\usepackage{verbatim}
\usepackage{bbm}

\usepackage[mathscr]{euscript}
\def\Tr{\operatorname{Tr}}

\def\T{\operatorname{T}}

\def\({\left(}
\def\){\right)}
\def\[{\left[}
\def\]{\right]}

\newcommand{\mc}[1]{\mathcal{#1}}



\newtheorem{theorem}{Theorem}

\newtheorem{corollary}{Corollary}

\newtheorem{definition}{Definition}
\newtheorem{example}{Example}

\newtheorem{lemma}{Lemma}
\newtheorem{observation}{Observation}

\newtheorem{remark}{Remark}

\begin{document}
\title{Detecting quantum properties in physical systems using proxy witnesses}
\author{Priya Ghosh}\email{priyaghosh@hri.res.in}
\affiliation{Harish-Chandra Research Institute,  A CI of Homi Bhabha National Institute, Chhatnag Road, Jhunsi, Prayagraj  211019, India}
\author{Ujjwal Sen}\email{ujjwal@hri.res.in}
\affiliation{Harish-Chandra Research Institute,  A CI of Homi Bhabha National Institute, Chhatnag Road, Jhunsi, Prayagraj  211019, India}
\author{Siddhartha Das}\email{das.seed@iiit.ac.in}
\affiliation{Centre for Quantum Science and Technology (CQST), Center for Security, Theory and Algorithmic Research (CSTAR), International Institute of Information Technology, Hyderabad, Gachibowli, Telangana 500032, India}
\affiliation{Centre for Quantum Information \& Communication (QuIC), \'{E}cole polytechnique de Bruxelles,   Universit\'{e} libre de Bruxelles, Brussels, B-1050, Belgium}

\begin{abstract}
In practice, it is quite challenging to detect a quantum property, a microscopic property, in a macroscopic system. In our work, we construct general proxy witnesses of quantum properties to detect their presence in quantum systems and we do so for quantum systems which may possibly be large. In particular, we discuss proxy witnesses for quantum properties like unextendibility, quantum coherence, activation, steerability, and entanglement. We apply these proxy witnesses in some widely considered examples of many-body systems, viz., the quantum Heisenberg models, the quantum J$_1$-J$_2$ model.
\end{abstract}

\maketitle

\section{Introduction}
Properties exhibited by quantum systems that are not present in classical systems, are known to be useful resources in many information processing and computational tasks. For example, entanglement is well-known as a useful resource in quantum communication~\cite{ent-dense-coding,ent-teleportation,ent-communication-3,ent-teleportation-2}, quantum key distribution~\cite{ent-key-1,ent-key-2,DW19,DBWH21}, quantum computation~\cite{ent-computation-1,ent-computation-2,ent-computation-3,ent-computation-4}, etc. Studying quantum properties of physical systems is also of wide interest from fundamental perspectives. Measurement of macroscopic observables for a macroscopic system is often easy, but detection of any microscopic observable is typically difficult.

As there are quantum properties that are not directly measurable, e.g., entanglement~\cite{entanglement}, unextendibility~\cite{W89a,doherty1,kaur1}, quantum coherence~\cite{coherence-review}, nonlocality~\cite{bell-rev}, activation~\cite{passive-3,SSC23}, etc., it is pertinent to characterize these properties and develop ``proxy" witnesses for their (indirect) detection~\cite{proxy,dowling,vedral,dagomir,lidar,gabriel}.

Unextendibility~\cite{doherty1,W89a,doherty2} is a quantum property that restricts sharing of quantum correlations between two systems in a joint state to multiple parties. The resource theory of unextendibility~\cite{kaur1,kaur2} can be seen as a relaxation of the resource theory of entanglement~\cite{relax-ent,entanglement}, and unextendibility has been studied in the context of quantum key distribution~\cite{moroder,mhyr,khatri2}, entanglement distillation~\cite{nowakowski,kaur1,berta}, quantum nonlocality~\cite{terhal,kumari}, etc. Quantum coherence~\cite{coherence} is an intriguing quantum phenomenon, presence of which is the litmus test of ``quantum" behavior and is the required resource for several quantum information processing tasks, e.g., quantum cryptography~\cite{coherence-crypto-1,coherence-crypto-2}, quantum metrology~\cite{coherence-metrology-1,coherence-metrology-2}, quantum thermodynamics~\cite{coherence-thermo-1,coherence-thermo-2,coherence-thermo-3,coherence-thermo-4}, etc.

The general difficulty in measuring
microscopic observables inspired some works where indirect means for measurement of some microscopic properties were discussed. In particular, these works proposed methods to achieve the detection of quantum entanglement in real experiments by means of measuring physical observables, which provide an indirect way to detect entanglement~\cite{proxy,dowling,vedral,dagomir,lidar,gabriel}. In this work, taking cue from previous works, we propose proxy witnesses for quantum unextendibility, quantum coherence, activation, steering, and entanglement, and present related observations for quantum systems. We first propose a few general criteria to detect quantum properties present in a macroscopic system by measuring the mean values of some feasible observables. This kind of detection method is called proxy witnessing of physical properties, and the associated observables are deemed as proxy witnesses. Additionally, we show that detection of some of these quantum properties (unextendibility, coherence, activation, steering, entanglement, etc.) is possible in different physical models such as the quantum Heisenberg model, the quantum J1-J2 model through our proxy witnessing criteria.

The paper is organized as follows: In Sec.~\ref{sec2}, we discuss the preliminaries required to discuss the main results obtained in this paper. In particular, we provide definitions of extendibility, Werner and isotropic states, the quantum Heisenberg model, the quantum J$_1$-J$_2$ model, PT- and APT-symmetric Hamiltonians, and passive and active states. In Sec.~\ref{sec3}, we mention some observations on the detection of any quantum property using proxy witnesses. In Sec.~\ref{sec4}, we provide lemmas for unextendibility witnessing and give a few applications of our unextendibility detection criterion for certain quantum many-body systems, viz. the quantum Heisenberg XXX model and the quantum J$_1$-J$_2$ model. In Sec.~\ref{sec5}, we provide a few realistic models on which quantum coherence detection through our proxy witness criteria can be achieved. In Sec.~\ref{sec6}, we find some conditions for witnessing activation experimentally and provide some applications of activation detection using proxy witnesses. In Sec.~\ref{sec7}, we discuss when steerability and entanglement in the quantum Heisenberg XXX model and in the quantum J1-J2 model can be detected through our proxy witnessing criteria. We provide our concluding remarks in Sec.~\ref{sec8}.

\section{Preliminaries}
\label{sec2}
In this section, we review some standard definitions and results from the literature that are required to derive and discuss the main results in later sections.

Let $\mathcal{H}_A$ and $\mathcal{H}_B$ denote the two Hilbert spaces associated with the quantum systems $A$ and $B$, so that $\mathcal{H}_{A} \otimes \mc{H}_B$ is the Hilbert space of the composite system $AB$.
We denote the dimension 
of the Hilbert space $\mc{H}$ with 
$\dim(\mc{H})$. Let $\mc{D}(\mc{H})$ denote the set of all density operators on $\mathcal{H}$. Let $\mathscr{U}(\dim(\mc{H}))$ denote the set of all unitary operators defined on a Hilbert space of dimension $\dim(\mc{H})$.  We denote the identity operator as $\mathbbm{1}$ on the relevant Hilbert space, and sometimes use its suffix to indicate the corresponding system. $\Gamma_{AB}$ and $F_{AB}$ respectively denote the projector on an unnormalized maximally entangled state and the swap operator:
 \begin{eqnarray}
 \Gamma_{AB} &= \sum_{i,j=0}^{d-1}\ket{ii}\bra{jj}_{AB},\label{eq:ent-op}\\
 F_{AB} &= \sum_{i,j=0}^{d-1}\ket{ij}\bra{ji}_{AB}\label{eq:swap-op},
 \end{eqnarray}
 where $d=\min\{\dim(\mc{H}_A),\dim(\mc{H}_B)\}$. The swap operator $F_{AB}$ for any bipartite state also can be written as
\begin{align}
\label{eq-swap}
    F_{AB} &=  \Pi^{+}_{AB} - \Pi^{-}_{AB},
\end{align}
where $\Pi^{+}_{AB}$ and $\Pi^{-}_{AB}$ are the projections onto the symmetric and anti-symmetric subspaces of $\mathcal{H}_A \otimes \mathcal{H}_B$, and satisfies $(\Pi^{+}_{AB})^2 + (\Pi^{-}_{AB})^2 = \Pi^{+}_{AB} + \Pi^{-}_{AB}= \mathbbm{1}$. Since $F_{AB}$ is the swap operator, $F_{AB}^2=\mathbbm{1}_{AB}$.

Any density operator $\rho\in\mc{D}(\mc{H})$, for $\dim(\mc{H})=d<\infty$, can be decomposed as~\cite{kimura,khaneja}
\begin{align}
    \rho = \frac{1}{d}\mathbbm{1} + \frac{1}{2} \boldsymbol{S} \cdot \hat{\boldsymbol{\lambda}},
\end{align}
where ${\hat{\boldsymbol{\lambda}}} = \lbrace \boldsymbol{\lambda_j} \rbrace$ and
$\lbrace \boldsymbol{\lambda_j} \rbrace_j$, for $j = 1,2,...,d^2 -1,$ are a set of Hermitian and traceless generators of the $\mathbbm{SU}(d)$ algebra. $\boldsymbol{S}$ is known as the coherence vector, and belongs to $\mathbbm{R}^{d^2 -1}$. $\{\lambda_j\}_j$ can be the set 
of Pauli matrices 
for $d=2$, and the set of Gell-Mann matrices for $d=3$. The Pauli matrices, $\{\sigma_i\}_i$, for $i\in\{1,2,3\}$, are the generators of $\mathbbm{SU(2)}$, with the following representation in the standard $\sigma_3$-basis ($\{\ket{0}_z \equiv \ket{\uparrow}_z,\ket{1}_z \equiv \ket{\downarrow}_z\}$):
\begin{align}
\label{pauli mat}
\sigma_1 &= \left( \begin{array}{cc}
0 & 1 \\
1 & 0 \end{array} \right), \quad 
\sigma_2 =  \left( \begin{array}{cc}
0 & -i \\
i & 0 \end{array} \right),\quad \sigma_3 =  \left( \begin{array}{cc}
1 & 0 \\
0 & -1 \end{array} \right).
\end{align}
The Gell-Mann matrices, $\{T_j\}_j$, for $j \in \lbrace 1,2,...,8 \rbrace$
are the generators of $\mathbbm{SU}(3)$.
See Appendix~\ref{gell-mat}.

\subsection{Extendibility and unextendibility}
Now we recall a quantum property called unextendibility which quantifies the unsharability of quantum correlations among multiple systems~\cite{doherty1,terhal}. In parallel, extendibility is the ability of sharing quantum correlations among multiple quantum systems. As discussed in Ref.~\cite{kaur1}, the resource theory of unextendibility can be argued to be a relaxation of the resource theory of entanglement. While the entanglement measures are monotones under local quantum operations and classical communication (LOCC channels), the measures of unextendibility are monotones under $k$-extendibility preserving channels, e.g., one-way LOCC channels, which form a strict subset of the set of LOCC channels. 

\label{subsecA}
\begin{definition}[$k$-extendible states~\cite{doherty1,W89a,doherty2}]\label{def:kex-state}
For an integer $k\geq 2$, a bipartite state $\rho_{AB}\in\mc{D}(\mc{H}_{AB})$ is  $k$-extendible with respect to a fixed subsystem $B$ if there exists a state $\Tilde{\rho}_{AB^k} \coloneqq \Tilde{\rho}_{AB_1B_2\cdots B_k}\in\mc{D}(\mc{H}_{AB_1B_2\cdots B_k})$ that satisfies
\begin{equation}\label{eq:kext}
\forall i\in [k]:\ \rho_{AB} =\Tr_{B^k\setminus B_i}\left[\Tilde{\rho}_{AB_1\ldots B_k}\right],
\end{equation}  
where $\mc{H}_{B_i}\simeq \mc{H}_B$ for all $i\in[k]\coloneqq \{1,\ldots,k\}$ and $B_1 \equiv B$. Here, $B^k = \lbrace B_1, B_2, \ldots, B_k \rbrace$.
\end{definition}

Eq.~\eqref{eq:kext} can be written in the terms of following two facts:
\begin{enumerate}
\item The state $\sigma_{AB_1B_2\cdots B_k}$ is permutation invariant with respect to the $B$ systems, in the sense that
$\forall\pi\in S_{k}$ 
\begin{equation}
\label{ext-condition-1}
\Tilde{\rho}_{AB_1B_2\cdots B_k} = 
\mathcal{W}_{B_{1}\cdots B_{k}}^{\pi}(\Tilde{\rho}_{AB_1B_2\cdots B_k}),
\end{equation}
where $\mc{W}^\pi_{B_{1}\cdots B_{k}}$ is the unitary permutation channel associated with $\pi$. Here, $S_{k}$ is the set of all permutations of the ordered set $\{1,\ldots,k\}$.
\item The state $\rho_{AB}$ is the marginal of
$\Tilde{\rho}_{AB_1 \cdots B_k}$, i.e.,
\begin{equation}
\label{ext-condition-2}
\rho_{AB} = \Tr_{B_2 \cdots  B_k}\left[\Tilde{\rho}_{AB_1\ldots B_k}\right].
\end{equation}
\end{enumerate}

\begin{definition}[k-unextendible states~\cite{doherty1,W89a,doherty2,kaur1}]
A bipartite state $\rho_{AB}$ that is not $k$-extendible w.r.t a fixed subsystem $B$ is  called $k$-unextendible on $B$. 
\end{definition}

Let $\mathrm{EXT}_k(A;B)$ be the set of all states $\rho_{AB}\in\mathcal{D}(\mathcal{H}_{AB})$ 
that are $k$-extendible on $B$. For convenience, henceforth, extendibility and unextendibility of a bipartite state will be discussed with respect to a fixed system $B$ unless stated otherwise.  

A bipartite state is separable if and only if the state is $k$-extendible $\forall k$, $k \in \mathbbm{N}$~\cite{doherty2}. A $k$-unextendible state for any finite $k \in \mathbbm{N} (\geq 2) $ is always an entangled state. Any $k$-extendible state always remains $k$-extendible after the action of one-way local quantum operations and classical communication ($1$W-LOCC)~\cite{kaur1,nowakowski}.

In general, it appears that determining whether an arbitrary bipartite state is separable or entangled is computationally hard ($\operatorname{NP}$-hard)~\cite{gurvits,gharibian}. There, $k$-extendibility provides a way to detect entanglement. The problem of checking for $k$-extendibility in any bipartite system can be framed as a semi-definite program (SDP)~\cite{doherty1}. Semidefinite programming corresponds to the optimization of a linear function subject to linear constraints with variables lying in the cone of positive semi-definite matrices~\cite{boyd}.

\subsection{Werner states}
In this section, we discuss about a special family of states called Werner states that obey certain symmetries. These states are the subject of study in different contexts of quantum information, e.g., in entanglement purification~\cite{wootters}, quantum teleportation~\cite{kim}, steering~\cite{nori}, quantum communication~\cite{vianna,dokovic,kaur1}, etc.

\label{subsecB}
Any Hermitian operator $H_{AB}$, where $\dim(\mc{H}_A)=\dim(\mc{H}_B)=d<\infty$, is called $(U \otimes U)$-invariant if for all unitary operators $U\in\mathscr{U}(d)$, we have~\cite{wer}
\begin{equation}
     H_{AB} =  U\otimes U H_{AB} U^\dag\otimes U^\dag.
\end{equation}
A $(U \otimes U)$-invariant Hermitian operator $H_{AB}^{W}$ can be written as~\cite{wer}
\begin{equation}
\label{her-wer}
H_{AB}^{W} = \alpha_1^{W} \mathbbm{1}_{AB} + \alpha_2^{W} F_{AB},
\end{equation}
where $\alpha_1^W$, $\alpha_2^W$ $\in \mathbbm{R}$ and $F_{AB}$ is the swap operator~\eqref{eq:swap-op}. 
 
\begin{definition}
 A Werner state $\rho^{W(p,d)}_{AB} \in\mathcal{D}(\mc{H}_{AB})$, where $\dim(\mc{H}_A)=\dim(\mc{H}_B)=d<\infty$, is one which is $(U \otimes U)$-invariant for arbitrary $U\in\mathscr{U}(d)$. A Werner state can be written as~\cite{wer}
 \begin{equation}
 \label{eqn-Werner}
\rho^{W(p,d)}_{AB} = (1-p) \frac{2}{d(d+1)} \Pi^{+}_{AB} + p \frac{2}{d(d-1)} \Pi^{-}_{AB},
\end{equation}
where $p \in [0,1]$.
\end{definition}

\begin{remark}
\label{remark-Werner}
A Werner state $\rho^{W(p,d)}_{AB}\in\mathcal{D}(\mc{H}_{AB})$ with $\dim(\mc{H}_A)=\dim(\mc{H}_B)=d$ is $k$-extendible if and only if~\cite{viola} 
\begin{equation}
p \leq \frac{1}{2}\left(\frac{d-1}{k}+1\right).   
\end{equation}
\end{remark}

For a two-qubit system $AB$, any $(U\otimes U)$-invariant Hermitian operator $H^W_{AB}$, where $\dim(\mc{H}_A)=\dim(\mc{H}_B)=2$ in Eq.~\eqref{her-wer}, can be expressed as
\begin{align}
\label{wer-sigma}
H^W_{AB} &= \left(\alpha_1^{W}+ \frac{\alpha_2^{W}}{2}\right) \mathbbm{1}_{AB} +\frac{\alpha_2^{W}}{2} \sum_{i=1}^3 \sigma_i^A \otimes \sigma_i^B, 
\end{align}
because the two-qubit swap operator can be decomposed as 
$F_{AB} =  \frac{1}{2} \mathbbm{1}_{AB} + \sum_{i=1}^3 \sigma_i^A \otimes \sigma_i^B$, where $\lbrace \sigma_i \rbrace_i$ for $i \in \lbrace 1,2,3 \rbrace$ are described in Eq.~\eqref{pauli mat}.

Similarly, for a two-qutrit system $AB$, any $(U\otimes U)$-invariant Hermitian operator $H^W_{AB}$, where $\dim(\mc{H}_A)=\dim(\mc{H}_B)=3$ in Eq.~\eqref{her-wer}, can be expressed in terms of $\mathbbm{SU}(3)$ generators, as
\begin{align}
H^W_{AB} &= \left(\alpha_1^{W}+ \frac{\alpha_2^{W}}{3}\right) \mathbbm{1}_{AB} +\frac{\alpha_2^{W}}{2} \sum_{j=1}^8 T_j^A \otimes T_j^B,
\end{align}
because $F_{AB} = \frac{1}{3} \mathbbm{1}_{AB} +\frac{1}{2} \sum_{j=1}^8 T_j^A \otimes T_j^B $ for a two-qutrit system.
Here, $\lbrace T_j \rbrace_j$ with $j \in \lbrace 1, \cdots , 8\rbrace$ denote Gell-mann matrices.

\begin{remark}
\label{werner-twirling}
For every bipartite state $\rho_{AB}$ with $\dim(\mc{H}_A)=\dim(\mc{H}_B)=d$, there exists a Werner state $\widetilde{\rho}^{W(p,d)}_{AB}$, where
\begin{align}
\label{state-changing1}
   \widetilde{\rho}^{W(p,d)}_{AB} & \coloneqq \int \dd \mu(U) U\otimes U\rho_{AB} U^\dag\otimes U^\dag,
\end{align}
for $U\in\mathscr{U}(d)$ and Haar measure $\dd \mu(U)$. We will refer to the physical operations $\int\dd \mu(U) U\otimes U(\cdot) U^\dag\otimes U^\dag$ as Werner twirling operations.
\end{remark}

\subsection{Isotropic states}
A Hermitian operator $H_{AB}$, where $\dim(\mc{H}_A)=\dim(\mc{H}_B)=d<\infty$, is called a $(U \otimes U^{\ast})$-invariant if~\cite{horodecki}
\begin{eqnarray}
\forall U\in\mathscr{U}(d),\  H_{AB} &=&  U\otimes U^\ast H_{AB} U^\dag\otimes (U^\ast)^\dag.
\end{eqnarray}
A $(U \otimes U^{*})$-invariant Hermitian operator, $H_{AB}^{I}$, with $\dim(\mc{H}_A)=\dim(\mc{H}_B)=d$, for an arbitrary $U\in\mathscr{U}(d)$, has the form~\cite{horodecki}
\begin{equation}
\label{iso-eqn}
H_{AB}^{I} = \alpha_1^{I} \mathbbm{1}_{AB} + \alpha_2^{I} \Gamma_{AB},
\end{equation}
 where $\alpha_1^I$, $\alpha_2^I$ $\in \mathbbm{R}$ and $\Gamma_{AB}$ denotes the projector on the unnormalized maximally entangled state, expressed by Eq.~\eqref{eq:ent-op}.

\begin{definition}
\label{eqn-isotropic}
An isotropic state $\rho^{I(t,d)}_{AB}\in\mathcal{D}(\mc{H}_{AB})$, where $\dim(\mc{H}_A)=\dim(\mc{H}_B)=d<\infty$, is one which is $(U\otimes U^{*})$-invariant for arbitrary $U\in\mathscr{U}(d)$. An isotropic state can be written as~\cite{horodecki}
\begin{equation}
\rho^{I(t,d)}_{AB} = t \Phi_{AB}^d + (1-t) \frac{\mathbbm{1}_{AB} - \Phi_{AB}^d}{d^2 -1},
\end{equation}
where $t \in [0,1]$ and $\Phi_{AB}^d \coloneqq d \Gamma_{AB}$.  
\end{definition}

\begin{remark}
\label{remark-iso}
 An isotropic state $\rho^{I(t,d)}_{AB} \in\mathcal{D}(\mc{H}_{AB})$ with $\dim(\mc{H}_A)=\dim(\mc{H}_B)=d$  is $k$-extendible if and only if~\cite{kaur1,viola}
 \begin{equation}
  t \in \left[ 0, \frac{1}{d}\left(1+ \frac{d-1}{k}\right) \right].   
 \end{equation}
 \end{remark}

For any two-qubit system $AB$, any $(U\otimes U^{*})$-invariant Hermitian operator $H^I_{AB}$, where $\dim(\mc{H}_A)=\dim(\mc{H}_B)=2$, can be expressed as (see Appendix~\ref{appendix-iso})
\begin{equation}
H^I_{AB} =  \left(\alpha_1^{I}+ \frac{\alpha_2^{I}}{2}\right) \mathbbm{1}_{AB} + \frac{\alpha_2^{I}}{2} (\sigma_1 \otimes \sigma_1 - \sigma_2 \otimes \sigma_2 + \sigma_3 \otimes \sigma_3),   
\end{equation}
where $\lbrace \sigma_i \rbrace_i$, for $i \in \lbrace 1,2,3 \rbrace$, are expressed in Eq.~\eqref{pauli mat}.

\begin{remark}
\label{isotropic-twirling}
For every bipartite state $\rho_{AB}\in\mathcal{D}(\mc{H}_{AB})$, with $\dim(\mc{H}_A)=\dim(\mc{H}_B)=d$, there exists an isotropic state $\widetilde{\rho}^{I(p,d)}_{AB}$, where
\begin{align}
\label{state-changing2}
  \widetilde{\rho}^{I(p,d)}_{AB} & \coloneqq \int \dd \mu(U) U\otimes U^\ast\rho_{AB} U^\dag\otimes (U^\ast)^\dag,
\end{align}
for $U \in \mc{U}(d)$ and the Haar measure $\dd \mu(U)$. The physical operations $\int\dd \mu(U) U\otimes U^\ast (\cdot) U^\dag\otimes (U^\ast)^\dag$ are often referred as isotropic twirling operations.
\end{remark}

\subsection{Quantum Heisenberg and J1-J2 models}
The quantum Heisenberg model~\cite{heis1,heis2,heis3} describes a one-dimensional chain of quantum spin-1/2 particles, and we consider it to be consisting of only nearest-neighbor interactions. The Hamiltonian of the Heisenberg model of $N$ sites, following periodic boundary conditions, i.e., $\vec{\sigma}^{N+1} = \vec{\sigma}^N$, is given by
 \begin{equation}
  H_{\mathrm{H}} =  \sum_{l=1}^N \lbrack J_x \sigma_x^l \sigma_x^{l+1} + J_y 
  \sigma_y^l \sigma_y^{l+1} + J_z \sigma_z^l \sigma_z^{l+1} \rbrack,    
 \end{equation}
 where $J_\alpha \in \lbrace J_x, J_y, J_z \rbrace $ are coupling constants and
 $\lbrace \sigma_i \rbrace_i$ for $i \in \lbrace x,y,z \rbrace$ are Pauli operators. $l \in \lbrace 1, \cdots N \rbrace$ denotes site number of the chain.
$J_\alpha > 0$ represents antiferromagnetic couplings whereas $J_{\alpha} < 0$ represents ferromagnetic ones.

The Heisenberg model is called an anisotropic Heisenberg
XYZ model when 
$J_x, J_y, J_z$ are all different. In the case when $J_x = J_y \neq J_z$, the Hamiltonian describes the partially anisotropic Heisenberg XXZ model. For $J_x = J_y = J_z$, we have the isotropic Heisenberg XXX model. The model satisfying $J_x \neq J_y$, $J_z = 0$ and $J_x = J_y$, $J_z = 0$ are known as the anisotropic and isotropic XY models respectively. The XY model with an external transverse field with periodic condition is
\begin{equation}
  H_{\textnormal{XY}}^{\textnormal{field}} =  \sum_{l=1}^N \lbrack J_x \sigma_x^l \sigma_x^{l+1} + J_y \sigma_y^l \sigma_y^{l+1} \rbrack + h \sum_{l=1}^N \sigma_z^l,
 \end{equation}
where $J_x$, $J_y$ are coupling constants, $h$ is a parameter associated with the external field.
The XY model becomes an Ising model when $J_x \neq 0$, $J_y = 0$.

The ground state energy of Heisenberg XXX model in the thermodynamic limit is given by~\cite{XXXthesis,gr-energy}
\begin{align}
\label{XXX-gr}
    E^{\textnormal{XXX}}_{\textnormal{gr}} \approx - 1.77 N \abs{J}, 
\end{align}
for $ J_x = J_y = J_z = J$.
The ground state energies of the antiferromagnetic isotropic XY model and the antiferromagnetic Ising model without external field in the thermodynamic limit are~\cite{LSM}
\begin{eqnarray}
E^{\textnormal{XY}}_{\textnormal{gr}} &=& - \frac{J_x}{\pi} N\\
 \textnormal{and} \quad E^{\textnormal{Ising}}_{\textnormal{gr}} &=& - \frac{J_x}{2} N,
\end{eqnarray}
respectively, with $N$ being the number of sites of the lattice.

The Hamiltonian of the 1D J$_1$-J$_2$ model~\cite{j1-j2-3,j1-j2-4,j1-j2-1,j1-j2-2} with $N$ lattice sites, with periodic boundary conditions, is given by
\begin{align}
    H_{\textnormal{J1-J2}} = J_1 \sum_{l=1}^{N}  \vec{\sigma}^l \cdot \vec{\sigma}^{l+1} + J_2 \sum_{l=1}^{N} \vec{\sigma}^l \cdot \vec{\sigma}^{l+2},
\end{align}
where $l$ denotes site number of the lattice and $\vec{\sigma} \coloneqq \lbrace \sigma_1, \sigma_2, \sigma_3 \rbrace$.
$J_1,J_2$ denote coupling constants for nearest neighbor and next-nearest neighbor interactions of lattice respectively.

\subsection{PT and APT-symmetric Hamiltonians}
Let $\hat{P}$ and $\hat{T}$ denote the parity (P) and time reversal (T) operators, respectively. In this section, we will briefly recapitulate two classes of Hamiltonians that are categorized as PT-symmetric and APT-symmetric~\cite{fang,PT-1,PT-qua-2,APT-1,APT-2}.

A Hamiltonian $H$ is called PT-symmetric if it commutes with $\hat{P}\hat{T}$, i.e., $[H, \hat{P}\hat{T}] = 0$~\cite{fang,PT-1,PT-qua-2}. Non-Hermitian Hamiltonians satisfying parity-time reversal symmetry can have real as well as complex eigenvalues. PT-symmetric non-Hermitian Hamiltonians display some unusual properties in a number of systems ranging from classical~\cite{PT-cla-1,PT-cla-2,PT-cla-3,PT-cla-4,PT-cla-11}
to quantum~\cite{PT-qua-1,PT-qua-2,PT-qua-3,PT-qua-4,fang}. Any PT-symmetric Hamiltonian for a single qubit has the following form in the standard $\sigma_z$-basis~\cite{fang,PT-1,PT-qua-2}:
\begin{equation}
\label{PT-eqn}
H_{\mathrm{PT}} =  \left( \begin{array}{cc}
i\gamma & s \\
s & -i\gamma \end{array} \right),    
\end{equation}
where $s > 0$ and $a = \frac{\gamma}{s} > 0$ represent, respectively, an energy scale and a coefficient corresponding to the degree of non-hermiticity.

A Hamiltonian is called APT-symmetric if it anti-commutes with the parity-time reversal operator $\hat{P}\hat{T}$, i.e., $\{H, \hat{P}\hat{T}\}= 0$~\cite{fang,APT-1,APT-2}. Moreover, APT-symmetric systems exhibit some interesting effects~\cite{APT-balance,APT-4,APT-3}. The matrix representation of an APT-symmetric Hamiltonian for a single-qubit system in the standard $\sigma_z$-basis is~\cite{fang,APT-1,APT-2}
\begin{equation}
\label{APT-eqn}
H_{APT} =  \left( \begin{array}{cc}
\gamma & is \\
is & -\gamma \end{array} \right),   
\end{equation}
with $s > 0$ and $a = \frac{\gamma}{s} > 0$ being, respectively, an energy scale and a coefficient corresponding to the degree of non-hermiticity.

\subsection{Passive states and active states}
Consider a Hamiltonian $H$ associated with a quantum system $A$, whose Hilbert space is $\mc{H}$.
Let the spectral decomposition of $H$ be
\begin{equation}
\label{passive-eqn-1}
     H = \sum_{i} E_i \op{\lambda_i},
 \end{equation}
where $\lbrace \ket{\lambda_i} \rbrace_i$ is an energy eigenbasis associated with energy eigenvalues $\{E_i\}_i$, where $E_i < E_j$ if $i < j$. $\lbrace \ket{\lambda_i} \rbrace_i$ forms an orthonormal basis of $\mc{H}$, where $\mc{H}$ can be finite- or infinite-dimensional, and where the zero eigenvalues of the Hamiltonian have also been taken account.

A state $\rho \in\mc{D}(\mc{H})$ is said to be passive if and only if it can be expressed as~\cite{passive-1,passive-2}
\begin{equation}
\label{passive-eqn-2}
    \rho = \sum_{i} q_i^{\downarrow} \op{\lambda_i},
\end{equation}
where $\lbrace q_i^{\downarrow} \rbrace_i$  denotes a non-increasing probability distribution, i.e., $ q_i^{\downarrow} \leq  q_j^{\downarrow}$ for all $i > j$. All passive states are incoherent with respect to the energy basis. No work can be extracted from passive states by the action of any unitary operations~\cite{passive-1,passive-2}. See~\cite{alicki,passive-3} and references therein for connections between passivity and quantum thermodynamics.

A state is called an active state if it is not passive. All pure states except for the ground state of the Hamiltonian are active. Active states are considered as resources in the context of quantum thermodynamics~(see Ref.~\cite{passive-3}).

\section{Detection of physical properties}
\label{sec3}
There has been a strong interest in deriving witness operators to detect entanglement via practical (observable) means~\cite{proxy,dowling,vedral,dagomir,lidar,gabriel}.
Detection of entanglement is based on inference from the values of observables that are directly measurable, and the associated witnesses are deemed proxy witnesses. Here we build upon previous works and make statements for proxy witness operators for any physical property that is not directly detectable, so that obtaining inferences regarding the desired physical properties could still be experimentally feasible.

We first make statements for observables of systems associated with a few special Hamiltonians.

\begin{lemma}\label{lem:sym-ham}
Consider a trace-class operator $\varrho$, a normal operator $A$, i.e., $A^{\dagger}A=AA^{\dagger}$, and a superoperator $\Omega$ that preserves $A$, i.e., $\Omega(A)=A$. For an analytic function $f: {\rm Dom}(f)\to {\rm Im}(f)$ such that the spectrum $\varsigma(A)$ of $A$ is a subset of the domain ${\rm Dom}(f)$ of the function $f$, where ${\rm Dom}(f)\subseteq\mathbb{R}$ and ${\rm Im}(f)\subseteq\mathbb{R}$, we have
\begin{eqnarray}
\label{tr-fun-eqn}
\Tr[\varrho f(A)]=\Tr[\Omega^{\dag}(\varrho)f(A)],\\
\label{var-fun-eqn}
{\rm var}(f(A))_{\rho} = {\rm var}(f(A))_{\Omega^\dag(\rho)} .
\end{eqnarray}
\end{lemma}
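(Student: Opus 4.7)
The plan is to prove Eq.~\eqref{tr-fun-eqn} by first showing $\Omega(f(A)) = f(A)$ and then invoking the defining adjoint relation $\Tr[\Omega^{\dag}(\varrho) X] = \Tr[\varrho\,\Omega(X)]$. Eq.~\eqref{var-fun-eqn} will then be reduced to Eq.~\eqref{tr-fun-eqn} applied to both $f$ and $f^{2}$.

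Because $A$ is normal, the spectral theorem supplies a projection-valued measure so that $A = \int a\, dE(a)$; since $f$ is analytic on a neighborhood of $\varsigma(A)\subseteq\mathrm{Dom}(f)$, the holomorphic functional calculus defines $f(A)=\int f(a)\,dE(a)$. Equivalently, on a suitable disk around a point $z_{0}$, $f$ admits a norm-convergent power series $f(z) = \sum_{n\geq 0} c_{n}(z-z_{0})^{n}$, giving $f(A) = \sum_{n\geq 0} c_{n}(A - z_{0}\mathbbm{1})^{n}$ with convergence in operator norm on $\varsigma(A)$.

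Next, I combine $\Omega(A)=A$ with linearity and norm-continuity of $\Omega$ to promote the invariance from $A$ to every power, $\Omega\bigl((A-z_{0}\mathbbm{1})^{n}\bigr) = (A-z_{0}\mathbbm{1})^{n}$, and hence $\Omega(f(A)) = f(A)$. This is the step I expect to be the main obstacle, because for a generic linear superoperator the set of fixed operators is only a linear subspace and need not be closed under products. The natural remedy, tacitly in force in all applications that follow (Werner and isotropic twirling, i.e., averages of unitary conjugations by a compact group), is that $\Omega$ is a projection onto a $\ast$-subalgebra of the bounded operators; any $A$ lying in that fixed-point algebra then has $A^{n}$ in the algebra as well, so the invariance passes to every summand and hence, by continuity, to $f(A)$.

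With $\Omega(f(A)) = f(A)$ established, Eq.~\eqref{tr-fun-eqn} follows at once via
\begin{equation*}
\Tr[\Omega^{\dag}(\varrho)\, f(A)] = \Tr[\varrho\,\Omega(f(A))] = \Tr[\varrho\, f(A)].
\end{equation*}
For Eq.~\eqref{var-fun-eqn}, expand ${\rm var}(f(A))_{\rho} = \Tr[\rho\,(f(A))^{2}] - (\Tr[\rho\, f(A)])^{2}$. Since $(f(A))^{2} = (f^{2})(A)$ with $f^{2}$ analytic on the same domain, Eq.~\eqref{tr-fun-eqn} applies to $f^{2}$ as well, giving $\Tr[\rho\, f(A)^{2}] = \Tr[\Omega^{\dag}(\rho)\, f(A)^{2}]$; combined with the same identity for $f$, both terms in the variance are invariant under $\rho \mapsto \Omega^{\dag}(\rho)$, establishing the claim.
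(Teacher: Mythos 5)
Your argument has the same skeleton as the paper's proof: first show $\Omega(f(A))=f(A)$, then apply the adjoint relation $\Tr[\varrho\,\Omega(X)]=\Tr[\Omega^{\dagger}(\varrho)X]$ to obtain Eq.~\eqref{tr-fun-eqn}, and finally run the same identity for $f$ and $f^{2}$ to obtain Eq.~\eqref{var-fun-eqn}; that last reduction is identical in both. The only divergence is in how the invariance of $f(A)$ is obtained. The paper expands $A=\sum_{a\in\varsigma(A)}a\,\Pi^{a}$, deduces $\Omega(\Pi^{a})=\Pi^{a}$ for each eigenvalue from the single identity $\sum_{a}a\,\Omega(\Pi^{a})=\sum_{a}a\,\Pi^{a}$, and then writes $\Omega(f(A))=\sum_{a}f(a)\,\Omega(\Pi^{a})=f(A)$; you expand $f$ in a power series and need $\Omega(A^{n})=A^{n}$ for every $n$.

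The obstacle you flag is genuine, and it is exactly the step the paper leaves unjustified: one linear relation among the operators $\Omega(\Pi^{a})$ does not determine each $\Omega(\Pi^{a})$ individually, just as linearity alone does not propagate $\Omega(A)=A$ to $\Omega(A^{n})=A^{n}$. These are the same gap in two guises. For an arbitrary superoperator the conclusion can in fact fail: on a single qubit take $A=\sigma_{3}$ and the linear map defined by $\Omega(\op{0})=\op{0}+\sigma_{1}$, $\Omega(\op{1})=\op{1}+\sigma_{1}$, $\Omega(\sigma_{1})=\sigma_{1}$, $\Omega(\sigma_{2})=\sigma_{2}$; then $\Omega(A)=A$ but $\Omega(A^{2})=\mathbbm{1}+2\sigma_{1}\neq A^{2}$, so Eq.~\eqref{tr-fun-eqn} fails for $f(x)=x^{2}$ and, e.g., $\varrho=\op{+}$. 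Your proposed repair --- requiring $\Omega$ to be a projection onto a $\ast$-subalgebra, which holds for the Werner and isotropic twirls used in every application of the lemma --- is the right one: an average of unitary conjugations is an average of algebra homomorphisms, its fixed-point set is a $\ast$-algebra, and an invariant normal $A$ then has all of its spectral projections, hence all of its powers and $f(A)$, inside that algebra. Under that added hypothesis both your power-series route and the paper's spectral-projection route close.
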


\begin{proof}
Let the spectral decomposition of the normal operator $A$ be
\begin{equation}
    A=\sum_{a\in\varsigma(A)}a \Pi^a,
\end{equation}
where $\Pi^a$ is the projector on the space spanned by the eigenvectors of $A$ corresponding to the eigenvalue $a$. As superoperator $\Omega$ keeps $A$ invariant, we have
\begin{equation}
\label{1st-lemma-eq-1}
    \Omega(A)=\sum_{a\in\varsigma(A)}a   \Omega(\Pi^{a})=\sum_{a\in\varsigma(A)}a \Pi^a.
\end{equation}
The function $f(A)$ on $A$ is defined as
\begin{equation}
    f(A)=\sum_{a\in\varsigma(A)}f(a)\Pi^{a}.
\end{equation}
The action of $\Omega$ on $f(A)$ is given as
\begin{equation}
    \Omega(f(A))=\sum_{a\in\varsigma(A)}f(a) \Omega(\Pi^{a})=f(A),
\end{equation}
since $\Omega(\Pi^{a})=\Pi^{a}$ from Eq.~\eqref{1st-lemma-eq-1}.
Based on the above equalities (identities), we have
\begin{align}
\label{pr-tr-eq}
    \Tr[\varrho f(A)]= \Tr[\varrho\Omega(f(A))]= \Tr[\Omega^\dag(\varrho)f(A)],
\end{align}
where we have used the definition of the adjoint of a superoperator for the last equality. Similarly, we can prove
\begin{align}
\label{pr-var-eq}
    \Tr[\varrho f^2(A)]=  \Tr[\Omega^\dag(\varrho)f^2(A)].
\end{align}
Hence Eq.~\eqref{var-fun-eqn} can be proved from Eqs.~\eqref{pr-tr-eq} and~\eqref{pr-var-eq}. 
\end{proof}
A direct consequence of the above lemma is the following corollary.

\begin{corollary}
\label{lemma-uni-op}
Consider a system with a Hamiltonian $H$ such that $H$ is preserved under a superoperator $\Omega$.
If the system is in a state $\rho$, then the mean energy of the system can be written as
 \begin{eqnarray}
 \Tr{[\rho H]} = \Tr{[\rho \Omega(H)]} = \Tr{[\Omega^{\dagger}(\rho) H]},
 \end{eqnarray}
where the state $\rho$ is arbitrary. 
\end{corollary}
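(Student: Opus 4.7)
The plan is to recognize that this corollary is an immediate specialization of Lemma~\ref{lem:sym-ham}, obtained by taking the analytic function $f$ to be the identity. First I would observe that any Hamiltonian $H$ is Hermitian, hence normal, so the hypothesis on $A$ in Lemma~\ref{lem:sym-ham} is automatically met. The spectrum $\varsigma(H)$ is a subset of $\mathbb{R}$, so it lies in the domain of $f(x)=x$, and the identity function is trivially analytic on $\mathbb{R}$. Thus all the premises of Lemma~\ref{lem:sym-ham} are in place, with $\varrho$ replaced by the state $\rho$ and the invariance condition $\Omega(A)=A$ provided by the assumption $\Omega(H)=H$.

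Next I would establish the first equality $\Tr[\rho H] = \Tr[\rho \,\Omega(H)]$ directly from the invariance hypothesis: since $\Omega(H)=H$, one can rewrite $H$ inside the trace as $\Omega(H)$ without changing the value. The second equality $\Tr[\rho \,\Omega(H)] = \Tr[\Omega^{\dagger}(\rho)\,H]$ is simply the defining property of the adjoint superoperator, namely $\Tr[X\,\Omega(Y)] = \Tr[\Omega^{\dagger}(X)\,Y]$ applied to $X=\rho$ and $Y=H$, together with the invariance $\Omega(H)=H$ used once more on the right-hand side. Both of these steps are essentially what appears in Eq.~\eqref{pr-tr-eq} of the preceding lemma's proof, specialized to $f(A)=A$.

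Because the result is a direct corollary, there is no real obstacle; the only thing to be careful about is not to inadvertently require more than Hermiticity of $H$ or more than the single identity $\Omega(H)=H$ (no complete-positivity or trace-preservation of $\Omega$ is needed, matching the generality of Lemma~\ref{lem:sym-ham}). I would keep the write-up to two or three lines, stating the specialization $f=\mathrm{id}$, invoking Eq.~\eqref{tr-fun-eqn}, and noting that $\rho$ plays the role of the trace-class operator, with arbitrariness of $\rho$ already built into Lemma~\ref{lem:sym-ham}.
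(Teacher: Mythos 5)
Your proposal is correct and matches the paper's treatment: the paper gives no separate proof, stating only that the corollary is ``a direct consequence'' of Lemma~\ref{lem:sym-ham}, which is precisely your specialization to $f=\mathrm{id}$ with $H$ Hermitian (hence normal) and the adjoint identity $\Tr[X\,\Omega(Y)]=\Tr[\Omega^{\dagger}(X)\,Y]$. Your additional care in checking the hypotheses (analyticity of the identity, real spectrum, no need for complete positivity of $\Omega$) is sound and slightly more explicit than the paper.
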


Now we consider detection of quantum properties of physical systems. These physical properties could be $k$-unextendibility, quantum coherence, activation, steerability, Bell non-locality, etc. We have the following observation.
\begin{observation}
\label{observation-1}
Let $\mathscr{C}$ be the set of all states in $\mc{D}(\mc{H})$ that lack a certain physical property, $P$. Let $\overline{\mathscr{C}}$ be the complementary set of $\mathscr{C}$ with respect to $\mc{D}(\mc{H})$. That is, for any $\rho\in\mc{D}(\mc{H})$, $\rho\in \overline{\mathscr{C}}$ if and only if $\rho\notin\mathscr{C}$. Consider a functional $f$ that acts on an input state $\rho\in\mc{D}(\mc{H})$ and a physical observable $\widehat{A}$, and the functional $f$ could be related to the physical property $P$, that we are interested in. Let $\lbrace f_P (\rho, \widehat{A}): \rho \in \mathscr{C} \rbrace$ be the set of functional values of all states $\rho\in\mathscr{C}$ for the observables $\widehat{A}$. For any $\xi\in\mc{D}(\mc{H})$, if $f_P (\xi, \widehat{A})\notin \lbrace f_P (\rho, \widehat{A}): \rho \in \mathscr{C} \rbrace$ then $\xi\in\overline{\mathscr{C}}$.
\end{observation}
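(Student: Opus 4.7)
The plan is to prove the observation by contrapositive, which reduces the statement to a one-line set-membership argument. I would first restate the contrapositive: if $\xi\in\mathscr{C}$, then $f_P(\xi,\widehat{A})$ must belong to the set $\{f_P(\rho,\widehat{A}):\rho\in\mathscr{C}\}$. This is immediate from the definition of the latter set as the image of $\mathscr{C}$ under the map $\rho\mapsto f_P(\rho,\widehat{A})$.

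Concretely, I would argue as follows. Assume for contradiction that $\xi\in\mathscr{C}$. Then, by the very definition of the image set $\{f_P(\rho,\widehat{A}):\rho\in\mathscr{C}\}$, the value $f_P(\xi,\widehat{A})$ lies in it, since $\xi$ itself is a valid choice of $\rho$. This contradicts the hypothesis $f_P(\xi,\widehat{A})\notin \{f_P(\rho,\widehat{A}):\rho\in\mathscr{C}\}$. Hence $\xi\notin\mathscr{C}$, and because $\overline{\mathscr{C}}$ is defined as the complement of $\mathscr{C}$ in $\mc{D}(\mc{H})$ and $\xi\in\mc{D}(\mc{H})$, we conclude $\xi\in\overline{\mathscr{C}}$, as required.

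There is essentially no mathematical obstacle in the logical argument itself, which is tautological once one unfolds the definition of the image of a set under a map. The operational content lies elsewhere: one must exhibit a functional $f_P$ and a physical observable $\widehat{A}$ for which the image set $\{f_P(\rho,\widehat{A}):\rho\in\mathscr{C}\}$ admits a tractable description, such as a closed interval or a one-sided bound in $\mathbb{R}$. I would therefore end the proof with a short remark emphasizing that the utility of the observation is contingent on finding such tractable characterizations, because only then can the criterion $f_P(\xi,\widehat{A})\notin\{f_P(\rho,\widehat{A}):\rho\in\mathscr{C}\}$ be checked experimentally. The subsequent sections of the paper carry out precisely this nontrivial step for the properties of $k$-unextendibility, coherence, activation, steerability, and entanglement.
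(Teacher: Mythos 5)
Your argument is correct and is exactly the immediate set-membership/contrapositive reasoning the paper implicitly relies on (the paper states this as an Observation without a written proof, treating it as self-evident). Your closing remark about the operational content residing in finding tractable descriptions of the image set accurately reflects how the paper uses the observation in the subsequent sections.
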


Let us suppose that $S'$ denotes a particular value of von Neumann entropy of a system on the Hilbert space $\mc{D}(\mc{H})$ with dimension $d$, which can be in between $0$ and $\log_2 (d)$.
And $A'$ denote a particular mean value of the observable $\widehat{A}$ for the corresponding system.
Let us now define few physical quantities based on our above assumptions and the observation set
\begin{eqnarray}
A_{\min, \mathscr{C}} &\coloneqq& \min_{\rho \in \mathscr{C}} \operatorname{Tr}[\rho \widehat{A}],  \label{eq:a-min-c} \\
A_{\min, \mathscr{C}, S'} &\coloneqq& \min_{\rho \in \mathscr{C}, S(\rho) \geq S' } \Tr[\rho \widehat{A}],\\
 S_{\max, \mathscr{C}, A'} &\coloneqq& \max_{\rho \in \mathscr{C},\Tr{[\rho \widehat{A}]} = A' } S(\rho) \label{eq-SDP},   
\end{eqnarray}
where $S(\rho)$ $\coloneqq - \Tr{[\rho \log_2 \rho]}$ is the von Neumann entropy of the state $\rho$~\cite{entropy}. One may also consider some other informational quantity instead of the entropy.

Now, we will develop a few criteria and a witness operator to detect the physical property $P$ for any system based on the Observation~\ref{observation-1}.

\textit{Properties of witness operator}.--- Suppose we want to find a witness operator for a physical property $P$. All the states which does not have the physical property $P$, form a set $\mathscr{C}$ and $\overline{\mathscr{C}}$ is the complementary set of $\mathscr{C}$.  
Then any operator $Z$ will be called a witness operator of the physical property $P$ for the state $\xi \in \overline{\mathscr{C}}$,
if it satisfies
\begin{align*}
    &\Tr{[\rho Z]} \geq 0 , \forall \rho \in \mathscr{C}\\
    &\Tr{[\xi Z]} < 0.
\end{align*}

Now let us suppose that the physical property P is not observable, i.e., the property cannot be directly observed or measured. Also assume that it is very difficult to find a witness operator, $Z$, of the physical property for each and every state.
In this situation, we can witness the physical characteristic $P$ through a proxy witness operator, and we just need additional information about a functional acting on state and physical observable to build a proxy witness operator. That's why we call this kind of witness operator a ``proxy" witness operator.
For the sake of simplicity, the extra information we considered here is that for every given state $\rho \in \mathscr{C}$, the mean value of an observable $\widehat{A}$ is bounded as $\operatorname{Tr}[\rho \widehat{A}] < A_{\min, \mathscr{C}}$, where $A_{\min, \mathscr{C}}$ is provided by Eq.~\eqref{eq:a-min-c}.
Based on this observation, we can define a proxy witness operator, $Z^{\rm PW}$, to detect the presence of the physical property P as follows:

\begin{lemma}
\label{lemma5}
An operator, $Z^{\rm PW}$, acts as a proxy witness operator for the physical property $P$ with
\begin{align}
Z^{\textnormal{PW}} &\coloneqq  \widehat{A} -  A_{\textnormal{min},\mathscr{C}}.
\end{align}
If any state $\xi$ shows $\Tr[\xi Z^{\textnormal{PW}}] < 0$, then the state $\xi$ will have the physical property $P$.
\end{lemma}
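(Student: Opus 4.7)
The plan is to verify directly that the operator $Z^{\textnormal{PW}} := \widehat{A} - A_{\min,\mathscr{C}}\mathbbm{1}$ satisfies the two defining properties of a witness operator listed just above the lemma statement, and then obtain the claimed implication by contraposition. The argument should be very short, since the definition of $A_{\min,\mathscr{C}}$ in Eq.~\eqref{eq:a-min-c} already encodes the right lower bound by construction.

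First I would take an arbitrary $\rho \in \mathscr{C}$ and compute, using linearity of the trace and $\Tr[\rho]=1$,
\begin{equation}
\Tr[\rho Z^{\textnormal{PW}}] = \Tr[\rho \widehat{A}] - A_{\min,\mathscr{C}}\,\Tr[\rho] = \Tr[\rho \widehat{A}] - A_{\min,\mathscr{C}}.
\end{equation}
By the very definition of $A_{\min,\mathscr{C}}$ as the minimum of $\Tr[\sigma\widehat{A}]$ over $\sigma\in\mathscr{C}$, we have $\Tr[\rho\widehat{A}] \geq A_{\min,\mathscr{C}}$, so that $\Tr[\rho Z^{\textnormal{PW}}]\geq 0$ for every $\rho\in\mathscr{C}$. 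This is exactly the first condition required of a witness operator.

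Second, I would invoke the contrapositive. Suppose a state $\xi\in\mc{D}(\mc{H})$ satisfies $\Tr[\xi Z^{\textnormal{PW}}]<0$. If $\xi$ were in $\mathscr{C}$, then by the computation above we would have $\Tr[\xi Z^{\textnormal{PW}}]\geq 0$, a contradiction. Therefore $\xi\notin\mathscr{C}$, which by definition of the complementary set $\overline{\mathscr{C}}$ (cf.~Observation~\ref{observation-1}) means $\xi\in\overline{\mathscr{C}}$, i.e., $\xi$ possesses the physical property $P$. This simultaneously establishes the second witness condition $\Tr[\xi Z^{\textnormal{PW}}]<0$ on a specific state in $\overline{\mathscr{C}}$ and yields the detection statement asserted in the lemma.

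There is no genuine obstacle here; the only point worth flagging is a mild technical one: $A_{\min,\mathscr{C}}$ must be a true minimum (or at least a valid infimum that bounds $\Tr[\rho\widehat{A}]$ from below on $\mathscr{C}$) for the $\geq 0$ inequality to hold on all of $\mathscr{C}$. This is guaranteed whenever $\mathscr{C}$ is compact (as is typical for the sets of states lacking properties such as $k$-extendibility, separability, incoherence, passivity, or unsteerability considered later in the paper), so the construction extends uniformly to all the applications in subsequent sections.
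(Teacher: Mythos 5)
Your proof is correct and follows essentially the same route as the paper's: expand $\Tr[\xi Z^{\textnormal{PW}}]$ by linearity, use the definition of $A_{\min,\mathscr{C}}$ to get nonnegativity on $\mathscr{C}$, and conclude by contraposition. If anything, your version is slightly more careful than the paper's, which overstates that \emph{every} $\xi\in\overline{\mathscr{C}}$ satisfies $\Tr[\xi Z^{\textnormal{PW}}]<0$, whereas your contrapositive phrasing (and your remark on $A_{\min,\mathscr{C}}$ being a genuine minimum) captures exactly what is needed.
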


\begin{proof}
Here $A_{\textnormal{min}, \mathscr{C}}$ represent the mean value of observable $\widehat{A}$, minimized over all states in the set $\mathscr{C}$. We can write $\Tr{[\rho Z^{\textnormal{PW}}]}$ as follows:
\begin{eqnarray}
\Tr{[\rho Z^{\textnormal{PW}}]} &=& \Tr{[\rho \widehat{A} ]} - \Tr{[\rho A_{\textnormal{min}, \mathscr{C}}]},\\
&=& \Tr{[\rho \widehat{A}]} - A_{\textnormal{min}, \mathscr{C}} \Tr{[\rho]},\\
&=& \Tr{[\rho \widehat{A}]} - A_{\textnormal{min}, \mathscr{C}}.
\end{eqnarray}
Then any state, $\rho \in \mathscr{C}$, will definitely have $\Tr{[\rho \widehat{A} ]} \geq \Tr{[\rho A_{\textnormal{min}, \mathscr{C}}]}$. Therefore, $\Tr{[\rho Z^{\textnormal{PW}}]} \geq 0$ for all $\rho \in \mathscr{C}$. And any state, $\xi \in \overline{\mathscr{C}}$, will show $\Tr[\xi Z^{\textnormal{PW}}] < 0$.

Thus, $Z^{\textnormal{PW}}$ can act as a proxy witness operator for the physical property $P$. 
\end{proof}

We can also detect the quantum property $P$ through proxy witnessing using entropy.

\begin{corollary}
\label{lemma2}
Any quantum state $\xi$ with the entropy $S(\xi)\geq S'$ will have the property $P$ if the state $\xi$ satisfies
\begin{equation}
\label{eqn-lemma6}
\operatorname{Tr}[\xi \widehat{A}] < A_{\min, \mathscr{C}, S'}.  
\end{equation}
The operator $Z^{\textnormal{PW}}_{S'} \coloneqq \widehat{A} - A_{\min, \mathscr{C}, S'}$ will act as a proxy witness operator in this scenario. 
\end{corollary}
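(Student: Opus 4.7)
The plan is to mirror the proof of Lemma~\ref{lemma5} almost verbatim, with the single modification that the minimization defining the threshold is carried out over the entropy-restricted subset of $\mathscr{C}$. First I would take an arbitrary $\rho\in\mathscr{C}$ satisfying the side constraint $S(\rho)\geq S'$, and expand
\begin{equation}
\Tr[\rho\, Z^{\textnormal{PW}}_{S'}] = \Tr[\rho\,\widehat{A}] - A_{\min,\mathscr{C},S'}\,\Tr[\rho] = \Tr[\rho\,\widehat{A}] - A_{\min,\mathscr{C},S'},
\end{equation}
using linearity of the trace together with $\Tr[\rho]=1$. This is the same algebraic step that appeared in the proof of Lemma~\ref{lemma5}, just with the different constant $A_{\min,\mathscr{C},S'}$ in place of $A_{\min,\mathscr{C}}$.

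Second, I would invoke the defining optimization of $A_{\min,\mathscr{C},S'}$, namely that it is the infimum of $\Tr[\rho\,\widehat{A}]$ ranging over exactly those $\rho\in\mathscr{C}$ with $S(\rho)\geq S'$. For any such $\rho$ one therefore has $\Tr[\rho\,\widehat{A}]\geq A_{\min,\mathscr{C},S'}$, so the display above yields $\Tr[\rho\, Z^{\textnormal{PW}}_{S'}]\geq 0$ for every admissible $\rho\in\mathscr{C}$. This establishes one half of the witness condition on the restricted class.

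Third, I would close the argument by contrapositive, in the spirit of Observation~\ref{observation-1}: if a state $\xi$ obeying $S(\xi)\geq S'$ satisfies $\Tr[\xi\,\widehat{A}]<A_{\min,\mathscr{C},S'}$, i.e.\ $\Tr[\xi\, Z^{\textnormal{PW}}_{S'}]<0$, then $\xi$ cannot lie in the entropy-restricted part of $\mathscr{C}$; since its entropy is already at least $S'$, the only way out is $\xi\notin\mathscr{C}$, so $\xi\in\overline{\mathscr{C}}$ and thus possesses the property $P$.

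There is no real technical obstacle here: the statement is a direct specialization of Lemma~\ref{lemma5} in which the feasible set in the defining minimization is shrunk by the entropy constraint $S(\rho)\geq S'$. The only conceptual point worth flagging is that the hypothesis $S(\xi)\geq S'$ on the test state is essential — without it, a low-entropy $\xi\in\mathscr{C}$ could in principle attain $\Tr[\xi\,\widehat{A}]<A_{\min,\mathscr{C},S'}$ while still lacking $P$, since it would lie outside the set over which $A_{\min,\mathscr{C},S'}$ was optimized. Once this matching of constraints is made explicit, the witnessing conclusion follows immediately.
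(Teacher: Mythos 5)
Your argument is correct and is precisely what the paper intends: the authors simply state that the proof is ``similar to the proof of Lemma~\ref{lemma5},'' and your write-up carries out exactly that adaptation, restricting the minimization to the entropy-constrained subset of $\mathscr{C}$ and concluding by contrapositive. Your remark that the hypothesis $S(\xi)\geq S'$ on the test state is essential is a correct and worthwhile clarification that the paper leaves implicit.
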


The proof 
of above criterion is similar to the proof of Lemma~\ref{lemma5}.

\begin{corollary}
\label{coro-SDP}
If any bipartite state $\xi$ satisfies both 
$\Tr{[\xi \widehat{A}]} = A'$ and  $ Z^{\textnormal{PW}}_{A'} > 0$,
then system certainly have the quantum property P; where
\begin{eqnarray}
    Z^{\textnormal{PW}}_{A'} \coloneqq S(\xi) - S_{\max, \mathscr{C}, A'}.
\end{eqnarray}
\end{corollary}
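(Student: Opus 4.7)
The plan is to argue by contradiction, following the same template as the proof of Lemma~\ref{lemma5} but exchanging the roles of the observable and the entropy. Specifically, I would assume that a bipartite state $\xi$ satisfies $\Tr[\xi \widehat{A}]=A'$ and $Z^{\textnormal{PW}}_{A'}>0$, yet fails to possess the property $P$, so that $\xi\in\mathscr{C}$. The goal is then to derive a contradiction from the defining property of $S_{\max,\mathscr{C},A'}$.

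First I would recall the definition in Eq.~\eqref{eq-SDP}, namely
\begin{equation}
S_{\max,\mathscr{C},A'} \;=\; \max_{\rho\in\mathscr{C},\,\Tr[\rho\widehat{A}]=A'} S(\rho),
\end{equation}
which is a maximum taken over precisely the set of states in $\mathscr{C}$ whose mean value of $\widehat{A}$ equals $A'$. Under the contradictory hypothesis, $\xi$ belongs to the feasible set of this optimization, because $\xi\in\mathscr{C}$ and $\Tr[\xi\widehat{A}]=A'$ by assumption. Consequently $S(\xi)$ is majorized by the maximum, giving $S(\xi)\le S_{\max,\mathscr{C},A'}$, and therefore $Z^{\textnormal{PW}}_{A'}=S(\xi)-S_{\max,\mathscr{C},A'}\le 0$. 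This directly contradicts $Z^{\textnormal{PW}}_{A'}>0$, so $\xi\notin\mathscr{C}$, i.e., $\xi\in\overline{\mathscr{C}}$, which is exactly the claim that $\xi$ has the property $P$.

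Since the corollary is essentially an immediate consequence of the extremal definition of $S_{\max,\mathscr{C},A'}$, there is no substantive technical obstacle in the proof itself; the only subtlety worth flagging explicitly is that the feasible set in the defining maximization must be non-empty for the corollary to have content, and that the maximum is actually attained (or at least that ``$\le\sup$'' suffices for the contradiction argument). I would therefore add a brief comment that the statement is vacuous if no state in $\mathscr{C}$ realizes the value $A'$, and that in the general case one can replace ``$\max$'' by ``$\sup$'' without affecting the argument, since $S(\xi)\le \sup$ still yields $Z^{\textnormal{PW}}_{A'}\le 0$ under the contradictory hypothesis.
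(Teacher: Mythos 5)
Your argument is correct and is essentially the proof the paper intends: the paper gives no explicit derivation for this corollary, treating it (like Corollary~\ref{lemma2}) as an immediate consequence of the extremal definition in Eq.~\eqref{eq-SDP} in the same spirit as Lemma~\ref{lemma5}, which is exactly the contrapositive/feasible-set argument you give. Your added remarks on non-emptiness of the feasible set and on replacing $\max$ by $\sup$ are sensible refinements but do not change the substance.
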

The physical property $P$ of any system
can also be witnessed using the criteria Corollary~\ref{lemma2} and~\ref{coro-SDP}.

\section{Proxy witnessing of $k$-unextendibility}
\label{sec4}
In this section, we aim at deriving proxy witnesses to detect the $k$-unextendibility using semi-definite programming techniques. We also apply these proxy witnesses to find $k$-unextendibility for states corresponding to the Heisenberg XXX model, and the quantum J1-J2 model.

We define two kinds of mean values corresponding to the observable $\widehat{A}$
for a bipartite quantum system defined on $\mathcal{H}_{A}\otimes\mathcal{H}_B$:
\begin{eqnarray}
\label{obs-k-ext}
A^{\textnormal{min}}_{\rm EXT_k} &\coloneqq& \min_{\rho \in {\rm EXT}_k(A;B)} \Tr{[\rho \widehat{A}]},\\
A^{\textnormal{min}}_{\rm EXT_k, S'} &\coloneqq& \min_{\rho \in {\rm EXT}_k(A;B), S(\rho) \geq S' } \Tr[\rho \widehat{A}] ,
\end{eqnarray}
where $\mathrm{EXT}_k(A;B) $ denotes a set consisting of all states, $\rho \in\mathcal{D}(\mathcal{H}_{AB}) $,
that are $k$-extendible on $B$.
Any bipartite quantum system with density operator $\rho$ and observable $\widehat{A}$ will be $k$-unextendible on $B$ if the system follows
$\Tr[\rho \widehat{A}] < A^{\textnormal{min}}_{\rm EXT_k}$ for any given $k\geq 2$. From Lemma~\ref{lemma5}, $Z^{\textnormal{PW}}_{{\rm EXT}_k} \coloneqq \widehat{A} - A^{\textnormal{min}}_{\rm EXT_k}$ can serve as a proxy witness operator for $k$-unextendibility. Assume that $\mathrm{EXT}_{k+k'}(A;B) $ denotes a set consisting of all states,
that are ${(k+k')}$-extendible on $B$ where $k\geq 2$ and $k'\in\mathbbm{N}$.

\begin{lemma}
For any observable $\widehat{A}$, we have
\begin{align}
A^{\textnormal{min}}_{\rm EXT_k} \leq 
A^{\textnormal{min}}_{\rm EXT_{k},S'} \leq 
A^{\textnormal{min}}_{\rm EXT_{k+k'}} \leq   A^{\textnormal{min}}_{\rm EXT_{k+k'},S'},
\end{align}
for any $k\geq 2$ and $k'\in\mathbbm{N}$ (see Fig.~\ref{witness2}).
\end{lemma}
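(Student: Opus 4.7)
The strategy is to argue each of the three inequalities in the chain via a feasible-set comparison for the underlying minimization of $\Tr[\rho\widehat{A}]$, using two standard facts: minimizing over a smaller subset can only give a value no smaller than minimizing over a superset, and the set $\mathrm{EXT}_{k+k'}(A;B)$ of $(k+k')$-extendible states is contained in the set $\mathrm{EXT}_k(A;B)$ of $k$-extendible states.

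The outer inequality $A^{\textnormal{min}}_{\mathrm{EXT}_k} \leq A^{\textnormal{min}}_{\mathrm{EXT}_k, S'}$ follows immediately: the right-hand side adds the entropy constraint $S(\rho) \geq S'$ to the feasible set, so the infimum over the restricted set is at least the infimum over the full set $\mathrm{EXT}_k(A;B)$. The third inequality $A^{\textnormal{min}}_{\mathrm{EXT}_{k+k'}} \leq A^{\textnormal{min}}_{\mathrm{EXT}_{k+k'}, S'}$ is the same argument applied to $\mathrm{EXT}_{k+k'}$ in place of $\mathrm{EXT}_k$.

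For the middle inequality, the core ingredient is the inclusion $\mathrm{EXT}_{k+k'}(A;B) \subseteq \mathrm{EXT}_k(A;B)$. To prove this, starting from $\rho_{AB} \in \mathrm{EXT}_{k+k'}$ with symmetric extension $\tilde\rho_{AB_1\ldots B_{k+k'}}$, I would form the marginal $\bar\rho_{AB_1\ldots B_k} \coloneqq \Tr_{B_{k+1}\ldots B_{k+k'}}[\tilde\rho_{AB_1\ldots B_{k+k'}}]$. Since the symmetric group $S_k$ embeds into $S_{k+k'}$ as permutations fixing $B_{k+1},\ldots, B_{k+k'}$, and partial trace over the discarded copies commutes with permutations on the retained ones, $\bar\rho$ is permutation-invariant on $B_1,\ldots, B_k$; its single-site marginals on $AB_i$ for $i\in[k]$ are each $\rho_{AB}$, as inherited from the extension condition in Eq.~\eqref{ext-condition-2}. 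Hence $\rho_{AB}$ is $k$-extendible, and this inclusion already yields $A^{\textnormal{min}}_{\mathrm{EXT}_k} \leq A^{\textnormal{min}}_{\mathrm{EXT}_{k+k'}}$ and similarly with the $S'$ constraint on both sides.

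The main obstacle is upgrading the inclusion into the precise middle inequality $A^{\textnormal{min}}_{\mathrm{EXT}_k, S'} \leq A^{\textnormal{min}}_{\mathrm{EXT}_{k+k'}}$. The natural tactic is to pick a minimizer $\sigma^{*}$ of the right-hand side, note $\sigma^{*} \in \mathrm{EXT}_k$ by the inclusion, and argue $S(\sigma^{*}) \geq S'$ so that $\sigma^{*}$ lies in the feasible region of the left-hand side. This entropy bound on the optimizer is the nontrivial ingredient and is not automatic for arbitrary $S'$ (for example, product states are $(k+k')$-extendible for every $k,k'$ yet have vanishing entropy). I would therefore look for an implicit regime restricting $S'$, such as $S'$ bounded above by the maximum achievable entropy over $\mathrm{EXT}_{k+k'}(A;B)$ attainable by a minimizer; failing that, I would attempt a quantum de-Finetti-type lower bound on the entropy of $(k+k')$-symmetric extensions (using that high extendibility forces the state toward separable mixtures). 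If neither route is available in full generality, the lemma should be qualified by such an assumption on $S'$, consistent with the geometry depicted in Fig.~\ref{witness2}.
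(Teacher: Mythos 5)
Your handling of the two outer inequalities and of the inclusion $\mathrm{EXT}_{k+k'}(A;B)\subseteq\mathrm{EXT}_{k}(A;B)$ matches the paper's proof; in fact the paper merely asserts the inclusion, whereas you supply the marginal-plus-permutation argument, so your version is more complete on that point. The substantive difference concerns the middle inequality, and here the obstacle you flag is genuine and is \emph{not} resolved in the paper. The paper's proof establishes only $A^{\textnormal{min}}_{\mathrm{EXT}_{k}}\leq A^{\textnormal{min}}_{\mathrm{EXT}_{k},S'}$, $A^{\textnormal{min}}_{\mathrm{EXT}_{k+k'}}\leq A^{\textnormal{min}}_{\mathrm{EXT}_{k+k'},S'}$, and (from the set inclusion) $A^{\textnormal{min}}_{\mathrm{EXT}_{k}}\leq A^{\textnormal{min}}_{\mathrm{EXT}_{k+k'}}$, and then writes down the interleaved chain $A^{\textnormal{min}}_{\mathrm{EXT}_{k}}\leq A^{\textnormal{min}}_{\mathrm{EXT}_{k},S'}\leq A^{\textnormal{min}}_{\mathrm{EXT}_{k+k'}}$ without ever justifying the step $A^{\textnormal{min}}_{\mathrm{EXT}_{k},S'}\leq A^{\textnormal{min}}_{\mathrm{EXT}_{k+k'}}$; that step does not follow from the inclusion, because $\{\rho\in\mathrm{EXT}_{k}(A;B):S(\rho)\geq S'\}$ need not contain $\mathrm{EXT}_{k+k'}(A;B)$.

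Moreover, no de Finetti--type entropy bound can rescue the middle inequality in full generality: it is false for $S'>0$. Take two qubits and $\widehat{A}=-\op{00}$. The pure product state $\op{00}$ is $(k+k')$-extendible for all $k,k'$, so $A^{\textnormal{min}}_{\mathrm{EXT}_{k+k'}}=-1$. But any $\rho$ with $S(\rho)\geq S'>0$ has largest eigenvalue bounded away from $1$, hence $\Tr[\rho\,\widehat{A}]=-\bra{00}\rho\ket{00}>-1$, so $A^{\textnormal{min}}_{\mathrm{EXT}_{k},S'}>-1=A^{\textnormal{min}}_{\mathrm{EXT}_{k+k'}}$, reversing the claimed ordering. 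The honest conclusion is the one you reach at the end of your write-up: the lemma must be qualified, e.g.\ restricted to values of $S'$ for which some minimizer over $\mathrm{EXT}_{k+k'}(A;B)$ already satisfies the entropy constraint (trivially so for $S'=0$). The inequalities that hold unconditionally are exactly those delivered by your feasible-set comparisons: $A^{\textnormal{min}}_{\mathrm{EXT}_{k}}\leq A^{\textnormal{min}}_{\mathrm{EXT}_{k},S'}$, $A^{\textnormal{min}}_{\mathrm{EXT}_{k+k'}}\leq A^{\textnormal{min}}_{\mathrm{EXT}_{k+k'},S'}$, $A^{\textnormal{min}}_{\mathrm{EXT}_{k}}\leq A^{\textnormal{min}}_{\mathrm{EXT}_{k+k'}}$, and $A^{\textnormal{min}}_{\mathrm{EXT}_{k},S'}\leq A^{\textnormal{min}}_{\mathrm{EXT}_{k+k'},S'}$.
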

\begin{proof}
It is straightforward from the definition of $A^{\textnormal{min}}_{\rm EXT_{k},S'}$ and $ A^{\textnormal{min}}_{\rm EXT_k}$, that $ A^{\textnormal{min}}_{\rm EXT_k} \leq A^{\textnormal{min}}_{\rm EXT_{k},S'}$. Analogously, it will hold true for $\mathrm{EXT}_{k+k'}(A;B)$, i.e., $ A^{\textnormal{min}}_{\rm EXT_{k+k'}} \leq A^{\textnormal{min}}_{\rm EXT_{k+k'},S'}$.
Since there exist many states which are $k$-extendible but $(k+k')$-unextendible and the reverse is not true. Hence,
$\mathrm{EXT}_{k}(A;B)$ forms a bigger set than $\mathrm{EXT}_{k+k'}(A;B)$. Then, we can conclude that $A^{\textnormal{min}}_{\rm EXT_{k}} \leq  A^{\textnormal{min}}_{\rm EXT_{k},S'} \leq A^{\textnormal{min}}_{\rm EXT_{k+k'}}$ 
for all $\lbrace ( k \geq 2),k' \rbrace \subseteq \mathbbm{N}$. 
This concludes the proof.
\end{proof}

\begin{figure}
\includegraphics[scale =0.5]{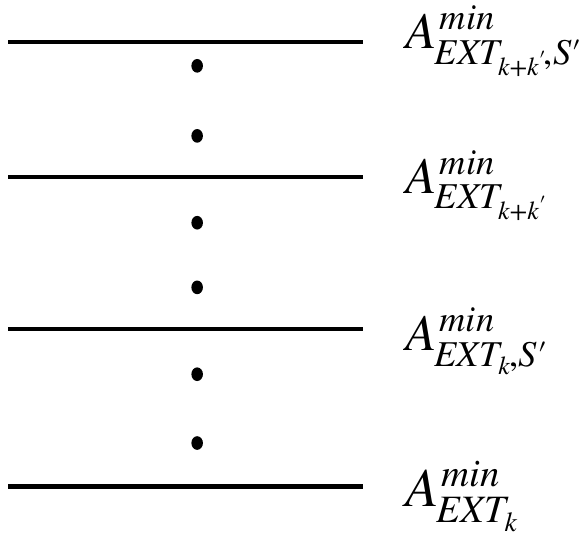}
\caption{The levels of different measurable quantities of $k$-extendibility, 
$A^{\textnormal{min}}_{\rm EXT_k}$,
$A^{\textnormal{min}}_{\rm EXT_{k},S'}$,
$A^{\textnormal{min}}_{\rm EXT_{k+k'}}$, 
and $A^{\textnormal{min}}_{\rm EXT_{k+k'},S'}$ with $\lbrace k (\geq 2), i \rbrace \subseteq \mathbbm{N}$, are shown here in increasing order from bottom to top.
}
\label{witness2}
\end{figure}

\subsection{Dual problem of $k$-unextendibility}
The problem, whether any bipartite system is k-extendible or not, can be cast as an SDP (semi-definite program) primal problem, a convex optimization problem. We can then identify the corresponding dual optimization problem, 
and detect $k$-unextendible states by numerically solving dual problem. As an example, here reframing the $S_{\max, \mathscr{C}, A'}$ as SDP primal problem and using Corollary~\ref{coro-SDP}, we try to find a new way to detect $k$-unextendible states of a bipartite system.

For this, let us consider a set consisting of all $k$-extendible bipartite states over $B$ subsystem in $\mathcal{D}(\mathcal{H}_{AB})$ and it is denoted by $\mathrm{EXT}_k(A;B)$. 
In Eq.~\eqref{eq-SDP}, the variable $\rho$ can only be positive semi-definite matrices and the objective function along with constraints  are linear, so we can consider it as a SDP primal problem.
To find the dual problem of the corresponding primal problem, we first have to construct a Lagrangian with the function that have to be maximized and the constraints present in the original problem. The constraints will be added to the Lagrangian using Lagrange multipliers. The variables of the original problem will be called primal variables, whereas the Lagrange multipliers are referred to as dual variables.

Since $\mathrm{EXT}_k(A;B)$ is a set of all $k$-extendible states over $B$ subsystem in $\mathcal{D}(\mathcal{H}_{AB})$, then we can assume that the state $\rho \in \mathrm{EXT}_k(A;B)$ is extended to the state $\Tilde{\rho}_{AB^k}$ which satisfies Eqs.~\eqref{ext-condition-1} and~\eqref{ext-condition-2}.
In this scenario, the Lagrangian is given by
\begin{align}
\mc{L}(\rho,\lambda,\mu,X_0,&\Tilde{\lambda},\Tilde{X_0},\nu_\pi) \nonumber \\ &= S(\rho) + \lambda(\Tr{\rho}-1) + \mu (\Tr{[\rho A] - A'}) 
   \nonumber\\ & + \Tr{[\rho X_0]} + \Tilde{\lambda} (\Tr{\Tilde{\rho}_{AB^k}}-1) +  \Tr{[\Tilde{\rho}_{AB^k} \Tilde{X_0}]} \nonumber \\ &+ \sum_{\pi  \in S_{k}} \nu_{\pi} \left[ \Tr_{P}{[W^\pi \Tilde{\rho}_{AB^k} W^\pi]} - \rho \right]. 
\end{align}
Here, $\rho$ is the primal variable whereas $\lambda,\mu,\Tilde{\lambda}, \lbrace \nu_\pi \rbrace_\pi$ act as dual variables along with positive semi-definite matrices $X_0$, $\Tilde{X_0}$ and $\lambda,\mu,\Tilde{\lambda}, \lbrace \nu_\pi \rbrace_{\pi} \in \mathbbm{R}$. 
$\mc{W}^\pi$ is the unitary permutation channel associated with $\pi \in S_{k}$, where $S_{k}$ is the set of all permutations of the ordered set $\{1,\ldots,k\}$.
$\Tr_{P}$ is the partial trace over any $(k-1)$ subsystems of $\Tilde{\rho}_{AB^k}$ except $A$.

The dual problem of any original primal problem corresponds to the minimization of the dual objective with respect to all the dual variables, where the dual objective is the Lagrangian maximized over all primal variables. It was shown that any feasible dual point gives an upper bound on the primal problem, and it is known as ``\textit{weak duality}".
In this scenario, the dual objective function is given by
\begin{align}
     \textit{l}(\lambda,\mu,X_0,\Tilde{\lambda},\Tilde{X_0},\nu_\pi) \coloneqq \max_{\rho}  \mc{L}(\rho,\lambda,\mu,X_0,\Tilde{\lambda},\Tilde{X_0},\nu_\pi).
\end{align}
We can see that the following relation holds for every primal feasible $\rho$ and dual feasibles $\lambda,\mu,X_0,\Tilde{\lambda},\Tilde{X_0},\nu_\pi$:
\begin{align}
    S(\rho) \leq   \mc{L}(\rho,\lambda,\mu,X_0,\Tilde{\lambda},\Tilde{X_0},\nu_\pi) \leq  \textit{l}(\lambda,\mu,X_0,\Tilde{\lambda},\Tilde{X_0},\nu_\pi). 
\end{align}

Therefore, we can say
that  
\begin{eqnarray}
S_{\max,\mathscr{C},A'} \leq \min \textit{l}(\lambda,\mu,X_0,\Tilde{\lambda},\Tilde{X_0},\nu_\pi),
\end{eqnarray}
where minimization is performed over dual feasible region, i.e., $\lambda,\mu,\Tilde{\lambda} \in R$, $X_0$, $\Tilde{X_0} \geq 0$, $\nu_{i} \geq 0$.

From Corollary~\ref{lemma5}, any state $\xi$ with $\Tr{[\xi \widehat{A}]} = A'$ satisfying 
\begin{equation}
\label{eqn-sdp-2}
S(\xi) > \min \textit{l}(\lambda,\mu,X_0,\Tilde{\lambda},\Tilde{X_0},\nu_\pi),    
\end{equation}
will definitely be $k$-unextendible on $B$ subsystem. Thus, we can find $k$-unextendible states for lower-dimensional composite systems via semi-definite programming.

\subsection{Applications of proxy witness of $k$-unextendibility}
Here, we will establish detection criteria for $k$-unextendible bipartite states using $(\textnormal{U} \otimes \textnormal{U})$-invariant and $(\textnormal{U} \otimes \textnormal{U}^*)$-invariant Hermitian operators. We label these detection criteria by ``proxy witnesses", since the aforementioned Hermitian operators indirectly aid in the development of $k$-unextendibility witness operators.
Moreover, we will use these detection criteria to find $k$-unextendible states of multipartite systems and will provide a few examples of witnessing $k$-unextendibility of multipartite states through proxy witnessing in this subsection.

\begin{example}
Given a bipartite system with Hamiltonian $H^W_{AB}$ written in Eq.~\eqref{her-wer}, for $\dim(\mc{H}_A)=\dim(\mc{H}_B)=d$, any state $\rho_{AB}$,
will be $k$-unextendible for any $k \in \mathbbm{N}$ if the state satisfies
\begin{equation}
\label{bi-wer-eqn}
\Tr{[\rho_{AB} H^W_{AB}]} <  (\alpha_1^{W} - \alpha_2^{W} \frac{d-1}{k}),
\end{equation}
where $\alpha_1^{W}, \alpha_2^{W} \in \mathbbm{R}$ are parameters of the Hamiltonian, $H^W_{AB}$.
\end{example}

\begin{proof}
Let $\rho_{AB}$ be any $k$-extendible state belonging to the set $\mathrm{EXT}_k(A;B)$. 
Assume that the $(\textnormal{U} \otimes \textnormal{U})$-invariant Hamiltonian, $H^W_{AB}$, acts as an observable here.
We now evaluate $A^{\textnormal{min}}_{\rm EXT_k}$ for the observable $H^W_{AB}$ and call it $H^{W, \textnormal{min}}_{\rm EXT_k}$ . Mathematically, $H^{W, \textnormal{min}}_{\rm EXT_k}$ can be written as

\begin{align}
H^{W, \textnormal{min}}_{\rm EXT_k} &\coloneqq \min_{\rho_{AB} \in \mathrm{EXT}_k(A;B)} \Tr{[\rho_{AB} H^{W}_{AB}]} ,\\
 &= \min_{\rho_{AB} \in \mathrm{EXT}_k(A;B)} \Tr{[\rho_{AB} ( \alpha_1^{W} \mathbbm{1}_{AB} + \alpha_2^{W} F_{AB})]},\\
 &= \min_{\rho_{AB} \in \mathrm{EXT}_k(A;B)} \left[ \alpha_1^{W} + \alpha_2^{W} \Tr{[\rho_{AB} F_{AB}]} \right],\\
 &= \min_{\rho_{AB} \in \mathrm{EXT}_k(A;B)} \left[ \alpha_1^{W} + \alpha_2^{W} (1- 2p) \right], \label{appendix-A}\\
 &= \alpha_1^{W} + \alpha_2^{W} [1 - 2 \frac{1}{2}(\frac{d-1}{k} +1)],\label{remark-2}\\
 &= \alpha_1^{W} - \alpha_2^{W} \frac{d-1}{k},
\end{align}
where Eqs.~\eqref{appendix-A},~\eqref{remark-2} are obtained from Appendix~\ref{app-wer} and Remark~\ref{remark-Werner} respectively.
This completes the proof.
\end{proof}

\begin{example}
Given a bipartite system with Hamiltonian $H^I_{AB}$, where $\dim(\mc{H}_A)=\dim(\mc{H}_B)=d$, parameterized by $\alpha_1^{I}, \alpha_2^{I} \in \mathbbm{R}$ as written in Eq.~\eqref{iso-eqn}, any bipartite state $\rho_{AB}$ will be a $k$-unextendible state if
\begin{equation}
\label{bi-iso-eqn}
\Tr{[\rho_{AB} H^I_{AB}]} <  \alpha_1^{I} 
\end{equation}
holds for any $k \in \mathbbm{N}$.
\end{example}

\begin{proof}
Here, we will evaluate $A^{\textnormal{min}}_{\rm EXT_k}$ considering $(\textnormal{U} \otimes \textnormal{U}^*)$-invariant Hamiltonian, $H^I_{AB}$, as an observable. We will term it by $H^{I, \textnormal{min}}_{\mathrm{EXT}_k(A;B)}$ and it's mathematical form will be
\begin{align}
   &H^{I, \textnormal{min}}_{\rho_{AB} \in \mathrm{EXT}_k(A;B)} \nonumber \\
 &\coloneqq \min_{\rho_{AB} \in \mathrm{EXT}_k(A;B)} \Tr{[\rho_{AB} H^{I}_{AB}]} ,\\
 &= \min_{\rho_{AB} \in \mathrm{EXT}_k(A;B)} \Tr[\rho_{AB} (\alpha_1^{I} \mathbbm{1}_{AB} + \alpha_2^{I} \Gamma_{AB})],\\
 &= \min_{\rho_{AB} \in \mathrm{EXT}_k(A;B)} \left[ \alpha_1^{I} + \alpha_2^{I} \Tr{[\rho_{AB} \Gamma_{AB}]} \right],\\
 &= \min_{\rho_{AB} \in \mathrm{EXT}_k(A;B)} \left[ \alpha_1^{I} + \alpha_2^{I} (td) \right], \label{appendix-B1}\\
 &= \alpha_1^{I} \label{remark-4},
\end{align}
for any $k \in \mathbbm{N}$. We have applied Appendix~\ref{app-iso} and Remark~\ref{remark-iso} to get the Eqs.~\eqref{appendix-B1} and~\eqref{remark-4} respectively.

This implies that any state $\rho_{AB}$ will be $k$-unextendible if the inequality~\eqref{bi-iso-eqn} holds for any $k \in \mathbbm{N}$.
\end{proof}

We will now apply a bipartite $(\textnormal{U} \otimes \textnormal{U})$-invariant Hermitian operator, described in Eq.~\eqref{her-wer}, to a multipartite scenario for detecting $k$-unextendibility in a multipartite system through proxy witnessing. Let us consider a multipartite system composed of $2N$ subsystems $A_1$, $B_1, \cdots , A_N, B_N$ and $\mathcal{H}_{A_1} \otimes \mc{H}_{B_1} \otimes \cdots \mc{H}_{A_N} \otimes \mc{H}_{B_N}$ denotes the Hilbert space of the composite system.

\begin{definition}
\label{gen-Werner}
A $(\textnormal{U} \otimes \textnormal{U})$-invariant bipartite Hermitian operator can be generalized into a $(U \otimes  \cdots \otimes U)$-invariant $2N$-partite Hermitian operator 
and it has the form
 \begin{eqnarray}
 \label{wer-gen}
 H^{W}_{\textnormal{gen}} &=& \alpha_1 \mathbbm{1}_{A_1B_1......A_NB_N} + \sum _{m=1}^{N} \alpha_2^{m} ( F_{A_m B_m} \otimes \mathbbm{1}_{\overline{AB}/A_m B_m} \nonumber\\ &+& F_{A_{m+1}B_m} \otimes \mathbbm{1}_{\overline{AB}/A_{m+1}B_m}),
 \end{eqnarray}
where $\alpha_1$, $\alpha_2^m \in \mathbbm{R}$ 
for all $m = \lbrace 1,2,..., N \rbrace$ 
and $N \in \mathbbm{N}$. Here, $\mathbbm{1}_{A_1B_1......A_NB_N}$ and $F_{A_mB_m}$ denote the identity operator on $\mathcal{H}_{A_1} \otimes \mc{H}_{B_1} \otimes \cdots \mc{H}_{A_N} \otimes \mc{H}_{B_N}$ and the swap operator between subsystems $A_m$ and $B_m$ respectively.
\end{definition}

In our work, a multipartite state is said to be $k$-extendible if it is $k$-extendible with respect to each subsystem of the state. Otherwise, it is referred to as a $k$-unextendible multipartite state.
Now we will show that $H^W_{\textnormal{gen}}$ can be used to detect $k$-unextendible states for any $k \in \mathbbm{N}$ in a multipartite system, e.g., the Heisenberg XXX model, the quantum J$_1$-J$_2$ model. Let us suppose that all the $k$-extendible states of a $2N$-partite system on the Hilbert space $\mathcal{H}_{A_1} \otimes \mc{H}_{B_1} \otimes \cdots \mc{H}_{A_N} \otimes \mc{H}_{B_N}$
form the set $\mathrm{EXT}_k(A_1;B_1;...;B_N)$. We will denote any $k$-extendible state of $2N$-partite system on the Hilbert space $\mathcal{H}_{A_1} \otimes \mc{H}_{B_1} \otimes \cdots \mc{H}_{A_N} \otimes \mc{H}_{B_N}$ by $\eta$.

\begin{theorem}
\label{theo-2}
The ground state of the antiferromagnetic Heisenberg XXX model is a $k$-unextendible state in the thermodynamic limit for any $k > 2$.
\end{theorem}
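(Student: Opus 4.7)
The plan is to recast the antiferromagnetic XXX Hamiltonian in the $(U\otimes U)$-invariant form of a generalized Werner operator (Definition~\ref{gen-Werner}), derive a universal lower bound on its expectation over $k$-extendible multipartite states, and compare this bound with the thermodynamic-limit ground-state energy recorded in Eq.~\eqref{XXX-gr}. Using $\vec{\sigma}^l\cdot\vec{\sigma}^{l+1}=2F_{l,l+1}-\mathbbm{1}$ (which follows from the two-qubit decomposition of the swap operator in Sec.~\ref{subsecB}), the antiferromagnetic XXX Hamiltonian on $N$ sites with periodic boundary conditions and $J_x=J_y=J_z=J>0$ becomes $H_{\mathrm{H}}=2J\sum_{l=1}^{N}F_{l,l+1}-JN\,\mathbbm{1}$, which, after relabeling the sites alternately as $A_m,B_m$, fits exactly into the form of Eq.~\eqref{wer-gen}. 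Bounding $\Tr[\eta H_{\mathrm{H}}]$ therefore reduces to bounding each nearest-neighbor swap expectation $\Tr[\eta F_{l,l+1}]$ uniformly over all $k$-extendible multipartite states $\eta$.

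The key bipartite lemma I would prove is $\Tr[\rho_{AB}\,F_{AB}]\ge -1/k$ for every two-qubit state $\rho_{AB}$ that is $k$-extendible on one side. Werner twirling (Remark~\ref{werner-twirling}) preserves the expectation of the $(U\otimes U)$-invariant swap operator and, being implementable by a one-way-LOCC protocol in which both parties apply the same random unitary, preserves $k$-extendibility; hence one may assume $\rho_{AB}$ is already Werner, and Remark~\ref{remark-Werner} with $d=2$ then forces the Werner parameter to satisfy $p\le\tfrac{1}{2}(1+1/k)$, giving $\Tr[\rho_{AB}F_{AB}]=1-2p\ge -1/k$. To lift this to the multipartite setting, I invoke the paper's convention that a multipartite $\eta$ is $k$-extendible iff it is $k$-extendible with respect to each site; tracing out all but two adjacent sites in the corresponding global symmetric extension shows each nearest-neighbor marginal $\eta_{l,l+1}$ is itself bipartite $k$-extendible, so the bound applies bond-by-bond. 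Summing over all $N$ bonds yields $\Tr[\eta H_{\mathrm{H}}]\ge -JN(1+2/k)$ for every multipartite $k$-extendible $\eta$, so Lemma~\ref{lemma5} identifies $Z^{\textnormal{PW}}_{\textnormal{EXT}_k}=H_{\mathrm{H}}+JN(1+2/k)\mathbbm{1}$ as a proxy witness of $k$-unextendibility.

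Finally, substituting the thermodynamic-limit ground-state energy $E^{\textnormal{XXX}}_{\textnormal{gr}}\approx -1.77\,N|J|$ from Eq.~\eqref{XXX-gr}, the witness fires on the ground state precisely when $1.77>1+2/k$, i.e., $k>2/0.77\approx 2.6$, which for integer $k$ is exactly $k>2$, completing the proof. The main technical obstacle is the passage from bipartite to multipartite $k$-extendibility: one must carefully justify that the paper's ``$k$-extendible with respect to each subsystem'' really ensures every two-body marginal is bipartite $k$-extendible (handled by restricting a global symmetric extension to a pair of adjacent sites), and one must accept that the bond-by-bond lower bound is not necessarily tight — no single $k$-extendible state need saturate every bond simultaneously — though it remains a genuine lower bound on the energy, which is all that is required to conclude unextendibility.
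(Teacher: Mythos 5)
Your proposal is correct and follows essentially the same route as the paper: rewriting the antiferromagnetic XXX Hamiltonian in the generalized $(U\otimes U)$-invariant (swap-operator) form, bounding each bond's swap expectation over $k$-extendible states via Werner twirling and the $k$-extendibility range of two-qubit Werner states, and comparing the resulting bound $-2NJ(1+2/k)$ with the thermodynamic-limit ground-state energy to obtain $k>2.597$. Your explicit treatment of the bipartite-to-multipartite reduction and of twirling preserving $k$-extendibility makes rigorous what the paper's Appendix~\ref{app-wer-gen} leaves implicit, but the argument is the same.
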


\begin{proof}
Substituting the form of the bipartite swap operator (from Eq.~\eqref{wer-sigma}) in Eq.~\eqref{wer-gen},
we have
\begin{align}
\label{H-XXX-1}
 H^{W}_{\textnormal{gen}} &= (\alpha_1+ \sum_{m=1}^{N} \alpha_2^m)\mathbbm{1}_{A_1B_1...B_N} \nonumber\\&+ \sum_{m=1}^{N} \frac{\alpha_2^m}{2} \sum_{i=1}^{3} (\sigma^{A_m}_i \otimes \sigma^{B_m}_{i} + \sigma^{A^{m+1}}_i \otimes \sigma^{B^m}_i),
\end{align}
where the multipartite system consists of $2N$ parties and $\lbrace \sigma_i \rbrace_i$ with $i \in \lbrace 1,2,3 \rbrace$ are described in Eq.~\eqref{pauli mat}.
 And $H^{W}_{\textnormal{gen}}$ will turn into the Hamiltonian of antiferromagnetic Heisenberg XXX model with lattice number $2N$ and coupling constant $J$, if
 \begin{align}
  \label{eqn-wer-gen}
 \alpha_1 &= - \sum_{m=1}^N \alpha_2^{m}, \hspace{3 mm}
 \frac{\alpha_2^m}{2} = J,
 \end{align}
holds for all $m \in \lbrace 1,2,..,N \rbrace$. It implies that $\alpha_1 = -2NJ$ and $\alpha_2^{m} = 2J$ for all $m$.

To prove the theorem, we will consider here the Hamiltonian of the Heisenberg XXX model as an observable. We will calculate $A^{\textnormal{min}}_{\rm EXT_k}$ for the observable $H^W_{\textnormal{gen}}$ (written in Eq.~\eqref{H-XXX-1}) over the set $\mathrm{EXT}_k(A_1;B_1;...;B_N)$ and will call it by $E^{\textnormal{min}}_{\textnormal{XXX}}$. We substitute the value of $\alpha_1$ and $\alpha_2^m$ from Eq.~\eqref{eqn-wer-gen} in the calculation of $E^{\textnormal{min}}_{\textnormal{XXX}}$, and then, $E^{\textnormal{min}}_{\textnormal{XXX}}$ will be given by

\begin{align}
E^{\textnormal{min}}_{\textnormal{XXX}} &= \min_{\eta \in \mathrm{EXT}_k(A_1;B_1;...;B_N)} \lbrack -2NJ + \sum _{m=1}^{N} 2J (1-2p_m) \nonumber \\ &+ \sum _{m=1}^{N} 2J (1-2p'_m) \rbrack \label{E-min-XXX-eqn},
\end{align}
using Appendix~\ref{app-wer-gen}. Here $2N$, $J$ are the site number and coupling constant of the antiferromagnetic Heisenberg XXX model, respectively, while each $p_m, p'_m$ for $m \in \lbrace 1,2,\cdots N \rbrace$ represents a parameter of bipartite Werner state on $\mathbb{C}^2$.

Since bipartite Werner state, parameterized by $p$, is a $k$-extendible state on the Hilbert space $(\mathcal{H}_{A} \otimes \mc{H}_B)$ with $\dim(\mc{H}_A)=\dim(\mc{H}_B)=d$, for $0 \leq p \leq \frac{1}{2} (\frac{d-1}{k} + 1)$, we have
\begin{align}
\label{ext-wer-gen}
     E^{\textnormal{min}}_{\textnormal{XXX}} = -2NJ (1 + \frac{2}{k}).
\end{align}
Then we can say from  Eqs.~\eqref{XXX-gr} and~\eqref{ext-wer-gen} that the ground state of the quantum antiferromagnetic Heisenberg XXX model with site number $2N$ and coupling constant $J$ will be a $k$-unextendible state for any $k \in \mathbbm{N}$ in the thermodynamic limit, if
\begin{align}
 -2\times1.77NJ &< - 2NJ( 1 + \frac{2}{k}),\\
 k &> 2.597,
\end{align}
which is always possible for any $k > 2$. Hence, the ground state of the antiferromagnetic XXX model is $k$-unextendible for any $k > 2$ in the thermodynamic limit.
\end{proof}

Similarly, we can find $k$-unextendible states of a multipartite system related to the 1D spin-1/2 J$_1$-J$_2$ model. 

\begin{theorem}
\label{theo-3}
Any $2N$-qubit state associated with the Hamiltonian of the antiferromagnetic J$_1$-J$_2$ model with site number $2N$ and coupling constants $J_1, J_2$, that has mean energy lower than
$E^{\textnormal{min}}_{\textnormal{J1-J2}} = -2N(J_1+J_2) (1 + \frac{2}{k})$ for any $k \in \mathbbm{N}$ will be $k$-unextendible.
\end{theorem}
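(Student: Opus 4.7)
The plan is to mirror the strategy of Theorem~\ref{theo-2}, treating the one-dimensional J$_1$-J$_2$ Hamiltonian as an extension of the generalised Werner operator $H^W_{\textnormal{gen}}$ of Definition~\ref{gen-Werner} that incorporates both nearest-neighbour and next-nearest-neighbour bipartite swap operators, and then applying the proxy witness criterion of Lemma~\ref{lemma5}.

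First, I would use the two-qubit identity $\vec{\sigma}^l\cdot\vec{\sigma}^{l'} = 2F_{l,l'} - \mathbbm{1}$ (equivalent to $F = \tfrac12(\mathbbm{1} + \vec{\sigma}\cdot\vec{\sigma})$, as implicitly used in Eq.~\eqref{wer-sigma}) to rewrite the Hamiltonian on $2N$ sites with periodic boundary conditions as
\begin{equation}
H_{\textnormal{J1-J2}} = -2N(J_1+J_2)\mathbbm{1} + 2J_1\sum_{l=1}^{2N}F_{l,l+1} + 2J_2\sum_{l=1}^{2N}F_{l,l+2}.
\end{equation}
Like $H^W_{\textnormal{gen}}$, this operator is $(U\otimes\cdots\otimes U)$-invariant, so the proxy witness machinery of Sec.~\ref{sec3} applies directly.

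Next, I would compute $E^{\textnormal{min}}_{\textnormal{J1-J2}} \coloneqq \min_{\eta\in\mathrm{EXT}_k(A_1;B_1;\ldots;B_N)} \Tr[\eta\, H_{\textnormal{J1-J2}}]$. For any such globally $k$-extendible state, each two-qubit marginal on a pair $(l,l')$ is itself bipartite $k$-extendible (with respect to whichever of the two sites is extended in the global extension). Applying Werner twirling (Remark~\ref{werner-twirling}) on that marginal preserves $\Tr[\eta F_{l,l'}]$ by $(U\otimes U)$-invariance of the swap and maps the marginal to a Werner state whose parameter, by Remark~\ref{remark-Werner} at $d=2$, obeys $p\leq\tfrac12(\tfrac1k+1)$. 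Hence $\Tr[\eta F_{l,l'}] = 1-2p \geq -1/k$ for every one of the $2N$ nearest- and $2N$ next-nearest-neighbour pairs. Since $J_1,J_2>0$ in the antiferromagnetic regime and the Hamiltonian is linear in these swaps, summing the pairwise bounds yields
\begin{equation}
E^{\textnormal{min}}_{\textnormal{J1-J2}} = -2N(J_1+J_2) - \frac{4N(J_1+J_2)}{k} = -2N(J_1+J_2)\!\left(1+\frac{2}{k}\right).
\end{equation}

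Finally, Lemma~\ref{lemma5} applied with proxy witness $Z^{\textnormal{PW}}_{\mathrm{EXT}_k} \coloneqq H_{\textnormal{J1-J2}} - E^{\textnormal{min}}_{\textnormal{J1-J2}}$ immediately delivers the conclusion: any $2N$-qubit state whose mean energy lies strictly below $E^{\textnormal{min}}_{\textnormal{J1-J2}}$ cannot belong to $\mathrm{EXT}_k(A_1;B_1;\ldots;B_N)$ and is therefore $k$-unextendible. I do not expect a serious obstacle, since the argument is structurally identical to Theorem~\ref{theo-2}; the only place deserving explicit care is verifying that the per-pair lower bound $-1/k$ can be invoked simultaneously for all $4N$ pairs, which is automatic because linearity of the trace reduces $\Tr[\eta H_{\textnormal{J1-J2}}]$ to a sum of marginal expectations, each bounded independently by the Werner-state analysis.
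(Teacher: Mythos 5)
Your proposal is correct and follows essentially the same route as the paper's proof: rewrite the J$_1$-J$_2$ Hamiltonian in terms of bipartite swap operators, bound each swap expectation via Werner twirling and the $k$-extendibility range $p\leq\tfrac12\bigl(\tfrac1k+1\bigr)$ of two-qubit Werner states, sum to get $-2N(J_1+J_2)\bigl(1+\tfrac2k\bigr)$, and invoke Lemma~\ref{lemma5}. You are in fact slightly more explicit than the paper (which defers to Appendix~\ref{app-wer-gen} and "the similar fashion as" Theorem~\ref{theo-2}) in justifying that every two-qubit marginal of a globally $k$-extendible state is itself $k$-extendible and that the per-pair bounds combine by linearity of the trace.
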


\begin{proof}
Here, we will calculate $A^{\textnormal{min}}_{\rm EXT_k}$ over the set $\mathrm{EXT}_k(A_1;B_1;...;B_N)$, considering the Hamiltonian of J$_1$-J$_2$ model with site number $2N$ and coupling constants $J_1, J_2 $ as an observable and will call it as $E^{\textnormal{min}}_{\textnormal{J1-J2}}$. Then, in the similar fashion as Eq.~\eqref{E-min-XXX-eqn}, we have calculated $E^{\textnormal{min}}_{\textnormal{J1-J2}}$ and it will be
\begin{align}
&E^{\textnormal{min}}_{\textnormal{J1-J2}} \nonumber \\ &= \min_{\eta \in \mathrm{EXT}_k(A_1;B_1;...;B_N)} [-2N(J_1+J_2) + 2J_1 \sum_{m=1}^{N} (1-2p_m) \nonumber \\&+   2J_1 \sum_{m=1}^{N} (1-2p'_m) + 2J_2 \sum_{l=1}^{N} (1-2p_l) + 2J_2 \sum_{l=1}^{N} (1-2p'_l)], \\  
&= -2N(J_1+J_2) (1 + \frac{2}{k}),
\end{align}
since a bipartite Werner state on the Hilbert space $(\mathcal{H}_{A} \otimes \mc{H}_B)$ with $\dim(\mc{H}_A)=\dim(\mc{H}_B)=d$,
parameterized by $p$, is a $k$-extendible state for any $k \in \mathbbm{N}$ when $p$ is in the range $0 \leq p \leq \frac{1}{2} (\frac{d-1}{k} + 1)$ and $d =2$ in this case. This concludes the proof. 
\end{proof}

Thus, we can detect $k$-unextendibility of any state of a multipartite system.

\section{Detection of quantum coherence by proxy witness}
\label{sec5}
Quantum coherence results from superposition of quantum states and is a useful resource in quantum computing, communication, metrology, etc. Suppose that we want to know about the quantum coherence of any state directly, then we need to do quantum state tomography of the state. It is known that quantum state tomography is very hard to do in real experiment. In that situation, we can find the existence of the quantum coherence of any state indirectly 
through proxy witnessing. In this section, we will find few detection criteria for quantum coherence through proxy witnessing. Moreover, we exemplify that proxy witness of quantum coherence works very well for quantum Heisenberg models.

Any quantum state on Hilbert space with dimension $d$, say $\chi$, will be called an incoherent state with respect to a reference basis $\lbrace \ket{i} \rbrace_i$ if it is diagonal with respect to the corresponding basis, i.e., $\chi = \sum_{i=0}^d p_i \ket{i} \bra{i}$, with $p_i \geq 0$ for all $i$ and $\sum_i p_i =1$. Coherent states are those which are not incoherent with respect to the same reference basis. Now let us consider that $\mathscr{I}$ denotes a set of all incoherent states on Hilbert space with dimension $d$ with respect to a reference basis. Let us suppose that $\widehat{A}$ acts as an observable and the minimum mean value of the observable $\widehat{A}$ over the set $\mathscr{I}$ 
is referred to as $A^{\textnormal{min}}_{\textnormal{inc}}$.
Then, the statement of Lemma~\ref{lemma5} regarding quantum coherence can be summarized as follows.

\begin{corollary}
\label{coherence-coro-1}
Any state $\chi$ on the Hilbert space with dimension $d$ will be a coherent state with respect to the reference basis if the state satisfies
   $\Tr{[\chi \widehat{A}]} < A^{\textnormal{min}}_{\textnormal{inc}}$.
\end{corollary}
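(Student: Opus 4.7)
The plan is to specialize the general proxy witness framework of Lemma~\ref{lemma5} to the coherence setting with essentially no extra work. I would identify the set $\mathscr{C}$ appearing in Lemma~\ref{lemma5} with $\mathscr{I}$, the set of all states that are diagonal in the fixed reference basis $\{\ket{i}\}_i$. Then $\overline{\mathscr{C}}$ is by definition the set of coherent states with respect to that basis, which is precisely the physical property $P$ we wish to witness. With this identification the quantity $A_{\textnormal{min},\mathscr{C}}$ of Lemma~\ref{lemma5} becomes $A^{\textnormal{min}}_{\textnormal{inc}}$ as defined immediately preceding the corollary.

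Having fixed the correspondence, I would invoke Lemma~\ref{lemma5} directly with the proxy witness $Z^{\textnormal{PW}} = \widehat{A} - A^{\textnormal{min}}_{\textnormal{inc}}\mathbbm{1}$. For any $\chi' \in \mathscr{I}$, the definition of $A^{\textnormal{min}}_{\textnormal{inc}}$ as a minimum over $\mathscr{I}$ ensures $\Tr[\chi' \widehat{A}] \geq A^{\textnormal{min}}_{\textnormal{inc}}$, hence $\Tr[\chi' Z^{\textnormal{PW}}] \geq 0$. By contraposition, any state $\chi$ with $\Tr[\chi \widehat{A}] < A^{\textnormal{min}}_{\textnormal{inc}}$ must lie in $\overline{\mathscr{C}} = \mathscr{D}(\mathcal{H}) \setminus \mathscr{I}$, i.e., must be coherent with respect to the reference basis.

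Since this is a direct specialization of an already-proved lemma, there is no analytical obstacle; the entire proof is the bookkeeping above. The one subtlety I would flag explicitly in the write-up is the basis dependence: both $\mathscr{I}$ and $A^{\textnormal{min}}_{\textnormal{inc}}$ are defined relative to the chosen reference basis, so the conclusion of coherence is likewise basis-relative. I would also remark, though it is not needed for the proof, that the optimization defining $A^{\textnormal{min}}_{\textnormal{inc}}$ reduces to a linear program on the probability simplex since $\Tr[\chi' \widehat{A}] = \sum_i p_i \bra{i}\widehat{A}\ket{i}$ for $\chi' = \sum_i p_i \op{i} \in \mathscr{I}$; the minimum is therefore attained at the basis vector $\ket{i^\ast}$ with the smallest diagonal entry of $\widehat{A}$, giving the explicit value $A^{\textnormal{min}}_{\textnormal{inc}} = \min_i \bra{i}\widehat{A}\ket{i}$. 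This makes the criterion operationally computable whenever the diagonal of $\widehat{A}$ in the reference basis is accessible.
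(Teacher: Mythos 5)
Your proposal is correct and follows essentially the same route as the paper: the corollary is obtained by specializing Lemma~\ref{lemma5} with $\mathscr{C}=\mathscr{I}$ and $A_{\min,\mathscr{C}}=A^{\textnormal{min}}_{\textnormal{inc}}$, exactly as you do. Your closing remark that the minimum reduces to $\min_i \bra{i}\widehat{A}\ket{i}$ on the probability simplex also matches the computation the paper carries out immediately after stating the corollary.
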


Let us suppose that we have a state $\rho$ on Hilbert space with dimension $d$. We want to find whether $\rho$ is a coherent state or not. For that, we will evaluate the minimum mean value of observable $A$ over set of all incoherent states, $\mathcal{I}$ and it can be determined as
\begin{align}
A^{\textnormal{min}}_{\mc{I}} &\coloneqq \min_{\rho_\mathscr{I} \in \mc{I}} \Tr{[\rho_\mathscr{I} A]},\\
 &= \min_{\rho_\mathscr{I} \in \mc{I}} \Tr{[A (\sum_i p_i \ket{i} \bra{i})]},\\
 &= \min_{\rho_\mathscr{I} \in \mc{I}} \sum_i p_i A_{ii},
\end{align}
where $\lbrace p_i \rbrace_i$ are nothing but the probabilities of getting the state $\ket{i}\bra{i}$ with $\sum_i p_i =1$. Hence, any state $\rho$ will be a coherent state if  the mean value of observable $A$ for the corresponding state is below than $A^{\textnormal{min}}_{\mc{I}}$ in that particular basis.

Now, we will detect coherent states in the single-qubit system, as well as multi-qubit system using  
PT- and APT-symmetric Hamiltonians, Hamiltonians of
the isotropic and anisotropic XY models, Ising models with and without an external field through proxy witnessing.

We can determine coherent single-qubit states with respect to the $\sigma_z$-basis through proxy witnessing. To do so, we choose a traceless Hermitian operator, $\widehat{A} \coloneqq \vec{\sigma}.\hat{n}$, as an observable. $\vec{\sigma} \coloneqq \lbrace \sigma_x, \sigma_y, \sigma_z \rbrace$ is a vector of three spin-1/2 Pauli matrices, and $\hat{n} = \lbrace \sin{\theta} \cos{\phi}, \sin{\theta}\sin{\phi}, \cos{\theta} \rbrace$ is a real three-dimensional unit vector where $\theta \in \lbrack 0, \pi \rbrack$ and $\phi \in \lbrack 0 , 2 \pi \rbrack$ represent zenith and azimuthal angles of a spherical coordinate system, respectively. 
Let us assume that all the incoherent single-qubit states of a system form a set that is denoted by $\mc{I}_{\textnormal{sq}}$. 
Let any single-qubit state be denoted by $\tau$.

\begin{example}
\label{qubit-co}
Any single-qubit state $\tau$ will be called a coherent state with respect to the standard $\sigma_z$-basis if we have $\Tr{[\tau \widehat{A}]} < - \cos \theta$ where $\theta$ denotes a parameter of the observable $\widehat{A} \coloneqq \vec{\sigma}.\hat{n}$.
\end{example}

\begin{proof}
We will calculate $A^{\textnormal{min}}_{\textnormal{inc}}$ for the observable $\vec{\sigma}.\hat{n}$ over the set $\mc{I}_\textnormal{sq}$ 
and it will be
\begin{eqnarray}
A^{\textnormal{min}}_{\textnormal{inc}} &\coloneqq& \min_{\tau \in \mc{I}_\textnormal{sq}} \Tr{[\tau \widehat{A}]},\\
 &=& \min_{\tau \in \mc{I}_\textnormal{sq}} \Tr{[\left(\sum_{i=0}^1 p_i \ket{i} \bra{i}\right) \widehat{A}]},\\
 &=& \min_{\tau \in \mc{I}_\textnormal{sq}}  \left[(p_0 - p_1) \cos{\theta}\right],\\
 &=& - \cos{\theta}.
\end{eqnarray}
This concludes the proof.
\end{proof}

It is a sufficient criterion, but not necessary. Using the criterion described above, we can determine if any single-qubit state is coherent or not with respect to the standard $\sigma_z$-basis, for example, the state $\ket{+}\bra{+}$. Here, $\ket{+}$ is an eigenstate of $\sigma_x$ with eigenvalue $1$.
The mean value of the observable $\widehat{A} \coloneqq \vec{\sigma}.\hat{n}$ for the state $\ket{+} \bra{+}$ is $\Tr{[\tau \widehat{A}]} =  \cos{\phi} \sin{\theta}$. Therefore, the state $\ket{+} \bra{+}$ will be coherent under standard $\sigma_z$-basis if $(\cos{\phi} \tan{\theta}) < -1$ holds.
Now, if we consider $\widehat{A} = - \sigma_x$, the criterion given in Example~\ref{qubit-co} satisfies. Hence, the state $\ket{+} \bra{+}$ is a coherent state with respect to the $\lbrace \ket{0}, \ket{1} \rbrace$ basis.

Any general single-qubit state $\tau$ can be represented as
$\tau = \frac{1}{2} (\mathbbm{1} + \vec{v} \cdot \vec{\sigma})$, where $\vec{v} \coloneqq \lbrace v_x, v_y, v_z \rbrace$ is a real vector with $v^2_x + v^2_y + v^2_z \leq 1$ and $\vec{\sigma} \coloneqq \lbrace \sigma_x, \sigma_y, \sigma_z \rbrace$ is is a vector of three spin-1/2 Pauli matrices. We can witness the quantum coherence for a single qubit state via PT- and APT-symmetric Hamiltonians as well.

\begin{example}
Any single-qubit state $\tau$ parameterized by $v_x, v_y,$ and $v_z$, will be a coherent state with respect to the standard $\sigma_z$-basis if one of the following inequalities
\begin{align}
    v_x s &< -i \gamma (1 + v_z) \label{PT-co-eqn},\\
    v_x s &< i \gamma (1 + v_z) \label{APT-co-eqn},
\end{align}
hold where the $s$ and $\gamma$ represent the parameters of $\textnormal{PT}$- and $\textnormal{APT}$-symmetric Hamiltonians.
\end{example}

\begin{proof}
Considering PT-symmetric Hamiltonian as an observable, the mean energy of any single-qubit state, $\tau$, will be
\begin{equation}
\Tr{[\tau H_{\textnormal{PT}}]} = i\gamma v_z  + v_x s.    
\end{equation}
where the state $\tau$ is parameterized by the vector $\vec{v} = \lbrace v_x, v_y, v_z \rbrace$.

Now, we will evaluate $A^{\textnormal{min}}_{\textnormal{inc}}$ considering PT-symmetric Hamiltonian as an observable over all incoherent single-qubit states
and we define it as $E^{\textnormal{min}}_{\textnormal{PT}}$. Therefore, it is mathematically given by

\begin{eqnarray}
E^{\textnormal{min}}_{\textnormal{PT}} &\coloneqq& \min_{\tau \in \mc{I}_{\textnormal{sq}}} \Tr{[\tau  H_{\textnormal{PT}} ]},\\
&=& \min_{\tau \in \mc{I}_{\textnormal{sq}}} [(p_0 - p_1)i\gamma],\\
 &=& - i \gamma,
\end{eqnarray}
where $p_0,p_1\geq 0$ and $p_0+p_1 = 1$.
It makes clear that when any single-qubit state satisfies~\eqref{PT-co-eqn}, it will be a coherent state with respect to the standard $\sigma_z$ basis. In the similar way, we can prove that a single-qubit state will be a coherent state if it satisfies~\eqref{APT-co-eqn} considering the APT-symmetric Hamiltonian as an observable.
\end{proof}

Let the set of all incoherent states of the $N$-partite system  
on the $(\mathbb{C}^2)^{\otimes N}$ Hilbert space be defined as $\mc{I}_{\textnormal{multi}}$, with $N \in \mathbbm{N}$.

\begin{example}
\label{lem-xy-co}
Any N-qubit state with a negative mean energy will be a coherent state with respect to the standard $\sigma_z$-basis when the Hamiltonian of the isotropic $\textnormal{XY}$ model is considered to calculate the mean energy.
\end{example}

\begin{proof}
Since all the diagonal elements of the Hamiltonian of the isotropic XY model are zero for any lattice site number $N$ in the conventional $\sigma_z$-basis, the $A^{\textnormal{min}}_{\textnormal{inc}}$ over the set $\mc{I}_{\textnormal{multi}}$ considering the Hamiltonian of the isotropic $\textnormal{XY}$ model as an observable will be zero.
It completes the proof.
\end{proof}
For the same reason, the criterion written in Example~\ref{lem-xy-co} will remain same if we consider an anisotropic XY model or an Ising model with no external field as an observable to calculate the mean energy of any $N$-qubit system.

In the thermodynamic limit, the ground state energies of an antiferromagnetic isotropic XY model and an antiferromagnetic Ising model without external field are $ (- \frac{J}{\pi} N)$ and $(- \frac{J}{2} N)$ respectively, with coupling constant $J$ and lattice site number $N$.
Hence, the ground states of an antiferromagnetic isotropic XY model and an antiferromagnetic Ising model without external field are coherent state under the standard $\sigma_z$-basis in the thermodynamic limit.
Furthermore, we can also use the criterion written below to detect coherent states of any $N$-qubit system.

\begin{example}
Any $N$-qubit state will be a coherent state with respect to the standard $\sigma_z$-basis if the mean energy is below $(-Nh)$, when the Hamiltonian of Ising model in presence of external field is considered to be the Hamiltonian of the system. Here, the external field of the Hamiltonian of the Ising model is parameterized by $h$ and $N$ is the lattice site number. 
\end{example}

\begin{proof}
$A^{\textnormal{min}}_{\textnormal{inc}}$ over the set $\mc{I}_{\textnormal{multi}}$ is $(-Nh)$
when we choose the Hamiltonian of the Ising model with an external field as an observable.
Then, from Corollary~\ref{coherence-coro-1}, we can conclude the proof.
\end{proof}

\section{Witnessing activation}
\label{sec6}
Any state of a system is said to be a passive state
if the mean energy cannot be decreased by acting on any unitary transformations upon the system when one have a single copy of the state, e.g., all the thermal states. This implies that work extraction, ergotropy for passive states is always zero. All the passive states of a system on Hilbert space with dimension d form a convex set in a d-dimensional space. Any state that is not passive is called an active state. 
In this section, our main goal is to detect active states of any system through proxy witnessing.

Let us suppose that the set of all passive states on the Hilbert space with dimension d is denoted by $\mc{P(S)}$.

\begin{corollary}
\label{le-gen-active} 
Any state is an active state 
if the mean energy of the state for the Hamiltonian $H_A$ is greater than $\max_{\mc{P}(S)} \sum_{m=0}^{d-1} q_m^{\downarrow} E_m$, where $\lbrace q_m^{\downarrow} \rbrace_m$ and $E_m$ represent the populations of the state with respect to energy eigenbasis and the eigenvalues of Hamiltonian $H_A$, respectively.
\end{corollary}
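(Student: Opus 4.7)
The plan is to apply the proxy witnessing philosophy of Lemma~\ref{lemma5} in its ``upper-bound'' variant, with $\mathscr{C}=\mc{P}(S)$ the set of passive states and the physical property $P$ being activity. The key input is that passive states admit a canonical spectral form diagonal in the energy eigenbasis, which renders the relevant mean-energy functional explicit and reduces the optimization to a problem over non-increasing probability vectors.

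First, by Eqs.~\eqref{passive-eqn-1}--\eqref{passive-eqn-2}, any $\rho \in \mc{P}(S)$ admits a decomposition $\rho = \sum_{m} q_m^{\downarrow}\op{\lambda_m}$, where $\{\ket{\lambda_m}\}_m$ is an energy eigenbasis of $H_A$ with ordered eigenvalues $\{E_m\}_m$ and $\{q_m^{\downarrow}\}_m$ is a non-increasing probability sequence. Taking the expectation immediately yields $\Tr[\rho H_A]=\sum_{m=0}^{d-1}q_m^{\downarrow}E_m$ for every passive state.

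Next, I would define $E_{\max,\mc{P}} \coloneqq \max_{\rho\in\mc{P}(S)} \Tr[\rho H_A] = \max_{\mc{P}(S)}\sum_{m=0}^{d-1}q_m^{\downarrow}E_m$, which is attained because $\mc{P}(S)$ is a compact convex subset of $\mc{D}(\mc{H})$. By construction every passive state satisfies $\Tr[\rho H_A]\leq E_{\max,\mc{P}}$. The ``max-version'' of the proxy-witness argument used to establish Lemma~\ref{lemma5} now delivers the contrapositive: any state $\xi$ obeying $\Tr[\xi H_A]>E_{\max,\mc{P}}$ cannot lie in $\mc{P}(S)$. Equivalently, the operator $\widetilde{Z}^{\textnormal{PW}} \coloneqq E_{\max,\mc{P}}\,\mathbbm{1} - H_A$ serves as a proxy witness for activity, since $\Tr[\xi\widetilde{Z}^{\textnormal{PW}}]<0$ certifies $\xi\notin\mc{P}(S)$. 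Because ``active'' is defined as ``not passive,'' the corollary follows.

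The main obstacle is conceptual rather than technical: Lemma~\ref{lemma5} is phrased with an infimum and a strict ``below'' threshold, whereas the present statement requires a supremum and a strict ``above'' threshold. This asymmetry is handled either by rerunning the same witness argument with inequalities reversed, or equivalently by invoking Lemma~\ref{lemma5} applied to the observable $-H_A$ (noting that $-H_A$ shares the same eigenbasis as $H_A$, so the set $\mc{P}(S)$ is unchanged); in both routes no further computation beyond the spectral form of a passive state is required.
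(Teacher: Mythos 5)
Your proposal is correct and follows essentially the same route as the paper: both compute the mean energy of an arbitrary passive state from its spectral form (Eqs.~\eqref{passive-eqn-1}--\eqref{passive-eqn-2}), obtaining $\Tr[\rho H_A]=\sum_{m}q_m^{\downarrow}E_m$, and then conclude via the generic proxy-witness exclusion argument (the paper cites Observation~\ref{observation-1} directly; your ``max-version'' of Lemma~\ref{lemma5} is just that same observation instantiated with reversed inequalities). The extra care you take about attainment of the maximum and the min-vs-max asymmetry is sound but not a different method.
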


\begin{proof}
The mean energy of any passive state, say $\rho_A \in \mc{P(S)}$, for the Hamiltonian $H_A$ will be given by
\begin{eqnarray}
\Tr{[\rho_A H_A]} &=& \sum_{i,m=0}^{d-1}  q_i^{\downarrow} E_m |\langle i | m \rangle |^2,\\
 &=& \sum_{i,m=0}^{d-1} q_i^{\downarrow} E_m \delta_{i,m},\\
 &=& \sum_{m=0}^{d-1} q_m^{\downarrow} E_m \label{mean-energy-passive},
\end{eqnarray}
where we substitute the form of the Hamiltonian $H_A$ and passive state $\rho_A$ from Eqs.~\eqref{passive-eqn-1} and~\eqref{passive-eqn-2} into $\Tr{[\rho_A H_A]}$.
Then, from Observation~\ref{observation-1}, we can certify the proof.
\end{proof}

Using the aforementioned criterion, we will now find a few active states in various contexts.

\begin{example}
\label{example-active-1}
The maximum mean energy over all passive qudit states on the Hilbert space with dimension $d$ corresponding to the Hamiltonian $H_A$ will occur when $q_i^{\downarrow} =\frac{1}{d}$ for all $i \in \lbrace 0,1,..., d-1 \rbrace$ under the energy eigenbasis. Hence, given the Hamiltonian $H_A$, any qudit state on the Hilbert space of dimension $d$ with a mean energy larger than $\frac{1}{d} \sum_{i=0}^{d-1} E_d$ will be an active state.
\end{example}

Let us suppose a two-dimensional system associated with the Hamiltonian $H$, which has two eigenenergies, $E_0$ and $E_1$, with $E_1 > E_0$.

\begin{example}
The excited energy eigenstate of any two-dimensional system associated with a Hamiltonian is an active state.
\end{example}

\begin{proof}
According to Eq.~\eqref{mean-energy-passive}, the minimum and maximum average energy over all passive single-qubit states for the Hamiltonian $H$ will be $E_0$ and $\frac{1}{2}(E_0 + E_1)$, respectively. 
Then, from Example~\ref{example-active-1}, we can conclude that the excited energy eigenstate of any single-qubit system for the Hamiltonian $H$ will be an active state. 
\end{proof}

\section{Proxy witnessing of steerability and entanglement}
\label{sec7}
Quantum steering~\cite{steer-rev} is a quantum correlation in quantum information science that lies in between quantum entanglement and Bell non-locality~\cite{bell-rev,HHH+05}. We will now discuss about the steerability of a bipartite state. Let us suppose that two parties (say Alice and Bob) share a state between them, and Alice performs a local operation on Alice's side from the setting ``$x$" and obtains the outcome ``a". Thus, Alice produces a conditional state $\rho(a|x)$ on Bob's part. If the conditional state on Bob $\rho(a|x)$ can be decomposed as
\begin{equation}
\rho(a|x) = \int d\lambda p(\lambda) p(a|x,\lambda) \sigma_\lambda,
\end{equation}
then it can be surely assured that the state shared between Alice and Bob is an unsteerable state. Otherwise, the bipartite state is said to be a steerable state. Here, $\sigma_\lambda$ represents a hidden state at Bob's side prepared with probability $p(\lambda)$ and $\lambda$ represent a hidden variable correlating Alice and Bob, while $p(a|x,\lambda)$ is the probability of getting outcome $a$ for the local measurement $x$ by Alice. It is called the local hidden state model (LHS). Steerability always certifies the entanglement of any quantum state, as all separable states admit the LHS model. Any bipartite quantum state is said to be Bell non-local if the correlation among two spatially separated systems cannot be mimicked by the local hidden variable model (LHV) or violates the Bell inequality~\cite{bell}. Moreover, Bell non-locality is useful in device-independent certification of quantum states and measurements~\cite{acin-crypto-bell,PGT+23,KHD22}.

In our work, a multipartite state is defined as ``unsteerable" if all its bipartite reduced states are unsteerable. If any bipartite reduced state of a multipartite state is steerable, then the multipartite state is termed steerable. Similarly, a multipartite state is considered Bell local if all its bipartite reduced states are Bell local; otherwise, the state is called Bell non-local.
We will detect steerable and Bell non-local states in multipartite systems, specifically within the Heisenberg XXX model and the J$_1$-J$_2$ model, using proxy witnessing.

\begin{lemma}
The ground state of the antiferromagnetic Heisenberg $\textnormal{XXX}$ model is steerable and Bell non-local in the thermodynamic limit.
\end{lemma}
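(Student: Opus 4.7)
My plan is to parallel the argument of Theorem~\ref{theo-2}, using the XXX Hamiltonian itself as the proxy observable. Using $\vec{\sigma}^l\cdot\vec{\sigma}^{l+1} = 2F_{l,l+1} - \mathbbm{1}$, I first rewrite
\begin{equation*}
 H_{\mathrm{H}}^{\textnormal{XXX}} = 2J\sum_{l=1}^{N} F_{l,l+1} - NJ\,\mathbbm{1},
\end{equation*}
so the mean energy is a linear functional of the nearest-neighbor swap expectations. Since the paper defines a multipartite state as steerable (respectively, Bell non-local) whenever \emph{some} bipartite reduction is, it suffices to argue for the nearest-neighbor reduced states.

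Next, I would define the sets $\mathscr{U}$ and $\mathscr{B}$ of $N$-qubit states all of whose nearest-neighbor reductions are unsteerable and Bell local, respectively, and invoke Lemma~\ref{lemma5} with proxy values $E^{\min}_{\mathscr{U}}$ and $E^{\min}_{\mathscr{B}}$ (the minima of $\Tr[\eta H_{\mathrm{H}}^{\textnormal{XXX}}]$ over these sets). The key tool is Remark~\ref{werner-twirling}: Werner twirling on a pair preserves $\langle F_{l,l+1}\rangle$ and, being a mixture of local unitaries, preserves both unsteerability and Bell locality. Hence in each minimization the optimizer is realized by a Werner state at the boundary of the corresponding class, parameterized by the largest admissible $p=p_\star$. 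Using $\langle F\rangle_{\rho^{W(p,2)}} = 1-2p$ from Appendix~\ref{app-wer}, this yields the closed form $E^{\min} = 2JN(1-2p_\star) - NJ = NJ(1-4p_\star)$.

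For steerability of a two-qubit Werner state, the sharp threshold is singlet fraction $1/2$, which translates via $f=(4p-1)/3$ to $p_\star^{\textnormal{S}} = 5/8$; plugging in gives $E^{\min}_{\mathscr{U}} = -\tfrac{3}{2}NJ$. Since $E^{\textnormal{XXX}}_{\textnormal{gr}} \approx -1.77\,NJ < -\tfrac{3}{2}NJ$ by Eq.~\eqref{XXX-gr}, Lemma~\ref{lemma5} certifies that the ground state lies outside $\mathscr{U}$ in the thermodynamic limit, i.e., is steerable. The analogous substitution with a Bell non-locality threshold $p_\star^{\textnormal{BL}}$ is what completes the second half of the lemma.

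The main obstacle is exactly this Bell non-locality half. The obvious choice, CHSH, yields only $p_\star^{\textnormal{BL}} = (1 + 3/\sqrt{2})/4 \approx 0.78$ and hence $E^{\min}_{\mathscr{B}} \approx -2.12\,NJ$, which lies \emph{below} $E^{\textnormal{XXX}}_{\textnormal{gr}}$; so CHSH alone is not enough to close the proxy-witness gap. Finishing the proof requires a sharper Bell violation criterion for $(U\otimes U)$-invariant two-qubit states—e.g.\ a multi-setting Bell inequality of the Vértesi type—whose threshold on the Werner parameter is low enough that $E^{\min}_{\mathscr{B}} > E^{\textnormal{XXX}}_{\textnormal{gr}}$. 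Identifying or invoking such a sharper inequality is therefore the delicate step; the rest of the argument is a short computation directly modelled on Theorem~\ref{theo-2}.
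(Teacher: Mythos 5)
Your steerability argument is, step for step, the paper's own proof: the paper also takes the XXX Hamiltonian itself as the proxy observable, reduces the minimization over the set of unsteerable states to the nearest-neighbour swap expectations via the Werner twirling of Remark~\ref{werner-twirling} and Appendix~\ref{app-wer}, inserts the unsteerability ceiling $p\leq 5/8$ for the two-qubit Werner state, and arrives at $-3NJ$ for a $2N$-site chain (your $-\tfrac{3}{2}NJ$ per $N$ sites is the same number), which it then compares against the ground-state energy of Eq.~\eqref{XXX-gr}. That half of your proposal is correct and identical in route to the paper's.

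The Bell non-locality half is where the substance lies, and you have correctly located the gap — but you should know that the gap is in the paper, not only in your attempt. The paper disposes of this half in a single sentence (``in a similar way \ldots using Refs.''), giving no threshold and no computation. Your CHSH estimate ($p_\star\approx 0.78$, hence a relaxed bound $\approx -2.12\,NJ$, below $-1.77\,NJ$) is right, but your hope for a ``sharper multi-setting inequality'' cannot be realized within this framework. The witness would need every two-qubit Werner state with $p> 2.77/4\approx 0.693$ (equivalently, singlet weight $w>0.59$) to be Bell non-local; however, Werner states with $w$ up to at least $\approx 0.68$ (i.e., $p$ up to $\approx 0.76$) are known to admit local hidden-variable models for all projective measurements, and mixtures of local unitaries preserve Bell locality, so the bond-wise lower bound on $E^{\min}$ over Bell-local states can never exceed roughly $-2.05\,NJ$, which remains below the ground-state energy. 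No choice of Bell inequality closes this, so the proxy-witness method as set up here cannot certify Bell non-locality of the XXX ground state; the lemma's second claim is therefore not actually established by the paper's proof, and your proposal is honest in leaving it open.
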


\begin{proof}
Let us suppose that $S_{\textnormal{st}}$ denotes a set of all unsteerable states of a $2N$-partite system. $\mathcal{H}_{A_1} \otimes \mc{H}_{B_1} \otimes \cdots \mc{H}_{A_N} \otimes \mc{H}_{B_N}$ denotes the Hilbert space of the $2N$-partite system and any state belonging to set $S_{\textnormal{st}}$ is denoted by $\zeta$. 
We consider the Hamiltonian of the XXX model with lattice site number $2N$ and coupling constant $J$ as an observable to evaluate $A_{\min, \mathscr{C}}$ and here $\mathscr{C}$ means et of all unsteerable states of a $2N$-partite system, $S_{\textnormal{st}}$.
The minimum mean energy over the set $S_{\textnormal{st}}$ for the corresponding observable is denoted as $E^{\textnormal{min}}_{S_{\textnormal{st}}}$. After the simplification, mathematically, $E^{\textnormal{min}}_{S_{\textnormal{st}}}$ can be expressed as
\begin{eqnarray}
E^{\textnormal{min}}_{S_{\textnormal{st}}} &=& \min_{\zeta \in S_{\textnormal{st}}} \lbrack - 2NJ + \sum _{m=1}^{N} 2J (1-2p_m) \nonumber \\&+& \sum _{m=1}^{N} 2J (1-2p'_m) \rbrack, \label{steer-eqn-1}\\
&=& - 2NJ + \sum _{i=1}^{N} 4J (1-2\frac{5}{8}) \rbrack,  \label{steer-eqn-2}\\
&=& - 3NJ,
\end{eqnarray}
where each and every $p_m$ and $p'_m$ for all $m \in \lbrace 1,2, \ldots, N \rbrace$ represent a parameter of bipartite Werner states on $\mathbb{C}^2$.
And Eq.~\eqref{steer-eqn-1} 
is obtained using Appendix~\ref{app-wer-gen}.
In Eq.~\eqref{steer-eqn-2}, we have used the fact that the range of unsteerability of the bipartite Werner state, parameterized by $p$, on the Hilbert space with dimension $d$ is $ \frac{(d-1)}{2d} \leq p \leq (1- \frac{d+1}{2d^2})$.
Since the ground state energy of antiferromagnetic XXX model with lattice number $N$ and coupling constant $J$ in the thermodynamic limit is
$E^{\textnormal{XXX}}_{\textnormal{gr}} \approx - 1.77 NJ$,
then, it is straightforward that the ground state of the antiferromagnetic XXX model with lattice number $2N$ and coupling constant $J$ will be a steerable state in the thermodynamic limit if
\begin{eqnarray}
2\times(-1.77NJ) &<& -3NJ,
\end{eqnarray}
which is always achievable for any $N \in \mathbbm{N}$ and $J > 0$. Thus, the ground state of the antiferromagnetic XXX model is steerable in the thermodynamic limit. In a similar way, we can prove the existence of the Bell non-local nature in the ground state of the antiferromagnetic XXX model using Ref.~\cite{steer-wer-1,steer-wer-2}.

\end{proof}
In addition to the quantum Heisenberg XXX model, the steerable and Bell non-local multipartite states in the quantum J$_1$-J$_2$ model can be found using the same procedure applied to the quantum Heisenberg XXX model.

Here, we will look for proxy witnessing of entanglement of multipartite quantum states of quantum many-body systems. 
Any multipartite state, let's say $\rho_{A_1,A_2,\ldots,A_N}$, is called fully separable if it can be decomposed as $\rho_{A_1,A_2,\ldots,A_N} = \sum_k q_k \rho_{A_1}^k \otimes \rho_{A_2}^k \otimes \cdots \rho_{A_N}^k$ with $0 \leq q_k \leq 1$ and $\sum_k q_k = 1$, where $\rho_{A_i}^k$ represents density matrices of subsystem $A_i$.
In our work, we will say that any multipartite state is entangled when the state is not fully separable.

\begin{lemma}
In the thermodynamic limit, the ground state of the antiferromagnetic $\textnormal{XXX}$ model is an entangled state.
\end{lemma}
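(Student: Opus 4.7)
The plan is to mirror the proof of Theorem~2 with the $k$-extendibility constraint replaced by full separability, and then apply Lemma~3. Concretely, I take $\mathscr{C}$ to be the set $\mathscr{S}_{\mathrm{sep}}$ of all fully separable $2N$-qubit states and $\widehat{A}=H_{\text{XXX}}$, and I seek a lower bound on
\begin{equation}
E^{\min}_{\mathrm{sep}} \coloneqq \min_{\eta\in\mathscr{S}_{\mathrm{sep}}} \Tr[\eta\, H_{\text{XXX}}]
\end{equation}
that I can then compare with $E^{\text{XXX}}_{\text{gr}}\approx -1.77\cdot 2N|J|$ from Eq.~(39).

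First I would reuse the generalized Werner decomposition of $H_{\text{XXX}}$ obtained in the proof of Theorem~2, namely
\begin{equation}
H_{\text{XXX}} = -2NJ\,\mathbbm{1} + 2J\sum_{m=1}^{N}\bigl(F_{A_m B_m} + F_{A_{m+1}B_m}\bigr),
\end{equation}
so that $\Tr[\eta\,H_{\text{XXX}}]$ depends only on the nearest-neighbor bipartite marginals of $\eta$. For any separable two-qubit state $\sigma_{AB} = \sum_i p_i\,\rho_i\otimes\tau_i$, a direct computation yields $\Tr[\sigma_{AB}F_{AB}] = \sum_i p_i \Tr[\rho_i\tau_i] \geq 0$, which is equivalent to the statement that a two-qubit Werner state is separable iff $p\leq 1/2$ (the $k\to\infty$ limit of Remark~1). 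Since every bipartite marginal of a fully separable global state is itself separable, this estimate applies term by term to both sums in the Hamiltonian and gives
\begin{equation}
\Tr[\eta\,H_{\text{XXX}}] \geq -2NJ \qquad \forall\,\eta\in\mathscr{S}_{\mathrm{sep}},
\end{equation}
so $E^{\min}_{\mathrm{sep}} \geq -2NJ$.

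Finally, since $E^{\text{XXX}}_{\text{gr}} \approx -3.54\,NJ < -2NJ \leq E^{\min}_{\mathrm{sep}}$ for every $J>0$, Lemma~3 applied with the proxy witness $Z^{\mathrm{PW}} = H_{\text{XXX}} - E^{\min}_{\mathrm{sep}}$ certifies that the ground state lies outside $\mathscr{S}_{\mathrm{sep}}$, i.e., it is entangled in the thermodynamic limit. I do not expect any serious technical obstacle: the argument is structurally identical to Theorem~2 and consistent with it in the $k\to\infty$ limit, in which the $k$-extendibility bound $-2NJ(1+2/k)$ approaches the separability bound $-2NJ$. The only real subtlety is to argue cleanly that the lower bound $\Tr[\sigma_{AB}F_{AB}]\geq 0$ on separable two-qubit marginals can be used term by term regardless of how those marginals are correlated through the global separable decomposition, which follows because separability of the global state implies separability of each bipartite reduction.
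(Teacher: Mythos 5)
Your argument is correct and is essentially the paper's own proof: the paper likewise rewrites the XXX Hamiltonian as a constant plus nearest-neighbour swap operators, lower-bounds the swap expectation on each (necessarily separable) two-qubit marginal of a fully separable state by zero — obtaining the separable minimum $-NJ$ for an $N$-site chain — and compares this with the ground-state energy $\approx -1.77N\abs{J}$. The only cosmetic differences are that the paper works out a tripartite example before generalizing and justifies $\Tr[\sigma_{AB}F_{AB}]\geq 0$ via the separability range $p\leq 1/2$ of two-qubit Werner states, whereas you compute $\Tr[(\rho\otimes\tau)F_{AB}]=\Tr[\rho\tau]\geq 0$ directly.
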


\begin{proof}
Let us consider a tripartite system composed of subsystems $A_1$, $A_2$, and $A_3$ and $\mathcal{H}_{A_1} \otimes \mc{H}_{A_2} \otimes \mc{H}_{A_3}$ denotes the Hilbert space of the composite system.
The Hamiltonian ($H_{\textnormal{XXX}}^{\textnormal{tri}}$) of the antiferromagnetic tripartite XXX model with coupling constant $J$ and periodic boundary conditions, i.e., $\vec{\sigma}^{N+1} = \vec{\sigma}^N$ where $N=3$ can be written in terms of the swap operator as follows:
\begin{align}
H_{\textnormal{XXX}}^{\textnormal{tri}} &= J \sum_{l=1}^{3} \sum_{i=1}^{3} \sigma^{l}_{i} \otimes \sigma^{l+1}_{i} ,\\
&= -3J + 2J \sum_{i=1}^{3} F_{A_i A_{i+1}} \label{H-tri-XXX},
\end{align}

where $l$ denotes $l$-th site of the lattice and $\lbrace \sigma_i \rbrace$ for $i \in \lbrace 1,2,3 \rbrace_i$ denotes the Pauli matrices described in Eq.~\eqref{pauli mat}.
$F_{A_iA_{i+1}}$ corresponds to the swap operator between the subsystems $A_i$ and $A_{i+1}$ of the tripartite system.

Let us suppose that $\mathscr{S}$ represents a set consisting of all the fully separable states of a tripartite system on the Hilbert space $\mathcal{H}_{A_1} \otimes \mc{H}_{A_2} \otimes \mc{H}_{A_3}$ with $\dim(\mc{H}_{A_1}) = \dim(\mc{H}_{A_2}) = \dim(\mc{H}_{A_3}) = 2$.
The minimum mean energy over the set $\mathscr{S}$ considering  the Hamiltonian of antiferromagnetic tripartite XXX model with coupling constant $J$ as a Hamiltonian is denoted as $E^{\textnormal{min}}_{\mathscr{S}}$ and it will have
\begin{align}
\label{GME-eqn-1}
E^{\textnormal{min}}_{\mathscr{S}} \nonumber  &=  \min_{\varepsilon \in \mathscr{S}} \Tr \left[ \varepsilon H_{\textnormal{XXX}}^{\textnormal{tri}} \right],\\ 
&= \min_{\varepsilon \in \mathscr{S}} \Tr \left[ \left( \sum_k q_k \rho_{A_1}^k \otimes  \rho_{A_2}^k \otimes \rho_{A_3}^k  \right)
H_{\textnormal{XXX}}^{\textnormal{tri}} \right] ,\\
&= \min_{\varepsilon \in \mathscr{S}} \left[ \sum_k q_k \left(  -3J + 2J\sum_{i=1}^{3}(1-2p_i) \right)_k 
\right], \label{GME-eqn-2}\\
&= -3J, \label{GME-eqn-3}
\end{align}
where $\sum_k q_k =1$. We have obtained Eq.~\eqref{GME-eqn-2} using Eq.~\eqref{H-tri-XXX}.
In Eq.~\eqref{GME-eqn-3}, we have used the fact that the separability range of the bipartite Werner state parameterized by $p$ on the Hilbert space with dimension $d$ is $\frac{(d-1)}{2d} \leq p \leq \frac{1}{2}$~\cite{steer-wer-2} and each $p_i
$ for $i \in \lbrace 1,2,3\rbrace$ represents a parameter of the bipartite Werner state on $\mathbb{C}^2$.

Similarly, we can obtain that the minimum mean energy over all the fully separable states 
of $N$-partite system on the Hilbert space $(\mathbb{C}^2)^{\otimes N}$ with the Hamiltonian of antiferromagnetic N-partite XXX model with coupling constant $J$ will be $(-NJ)$. Since the ground state energy of the $N$-partite antiferromagnetic XXX model with coupling constant $J$ and site number $N$ in thermodynamic limit is $-1.77NJ$, 
therefore from Lemma~\ref{lemma5}, we can conclude that the ground state of the antiferromagnetic XXX model is an entangled state in the thermodynamic limit.
\end{proof}
Furthermore, we can detect the entangled states of quantum J$_1$-J$_2$ model by the same procedure as well.

\section{Conclusion}
\label{sec8}
Detection of quantum properties in macroscopic systems presents a formidable challenge. Proxy witnessing offers a promising avenue to circumvent this difficulty. By measuring certain functionals comprised of observables and states, such as mean energy, proxy witnessing allows for the practical detection of quantum properties indirectly.
In our work, we have introduced and examined criteria for the detection of various quantum properties through proxy witnessing. Specifically, we have applied these criteria to detect $k$-unextendible, coherent, active, steerable, and entangled states. Furthermore, we have extended our analysis to realistic quantum many-body physics models, including the quantum Heisenberg model, quantum  model, as well as PT- and APT-symmetric Hamiltonians. Our work highlights the potential of proxy witnessing as a powerful tool for exploring and harnessing quantum properties in macroscopic regimes, with important implications for both fundamental physics and practical applications.

\begin{acknowledgments}
We acknowledge partial support from the Department of Science and Technology, Government of India through the QuEST grant (grant number DST/ICPS/QUST/Theme-3/2019/120). S.D. acknowledges individual fellowship at Universit\'{e} libre de Bruxelles; this project received funding from the European Union’s Horizon 2020 research and innovation program under the Marie Skłodowska-Curie Grant Agreement No. 801505. S.D. acknowledges support from the Science and Engineering Research Board, Department of Science and Technology (SERB-DST), Government of India under Grant No.~SRG/2023/000217. S.D. also thanks IIIT Hyderabad for the Faculty Seed Grant.
\end{acknowledgments}

\section*{appendix}
\subsection{Decomposition of bipartite maximally entangled operator on $\mathbbm{C}^2 \otimes \mathbbm{C}^2$}
\label{appendix-iso}
We can decompose an unnormalized bipartite maximally entangled operator ($\Gamma_{AB}$) acting on two-qubit states as
\begin{align}
\Gamma_{AB} &=  \sum_{m,n =0}^{3} t_{mn} (\sigma_m \otimes \sigma_n),\\
 &= \frac{1}{2} (\mathbbm{1} \otimes \mathbbm{1} + \sigma_1 \otimes \sigma_1 - \sigma_2 \otimes \sigma_2 + \sigma_3 \otimes \sigma_3),
\end{align}
where $\sigma_0 \coloneqq \mathbbm{1}$. $\lbrace \sigma_i \rbrace_i$ for $ i \in \lbrace 1, 2, 3 \rbrace$ are the Pauli matrices described by Eq.~\eqref{pauli mat} and only diagonal elements of correlation matrix $\lbrace t_{mn} \rbrace_{mn}$ of $\Gamma_{AB}$ are non-zero.

\subsection{Mean value of bipartite swap operator}
\label{app-wer}
Let us consider that $\rho_{AB}$ is a bipartite state on the Hilbert space $\mc{D}(\mc{H}_{AB})$ with dimension $\dim(\mc{H}_A)=\dim(\mc{H}_B)=d$.
We can evaluate the mean value of any bipartite swap operator expressed in Eq.~\eqref{eq-swap}, acting on any bipartite state $\rho_{AB}$ and it can have the following form
\begin{align}
    \Tr{[\rho_{AB} F_{AB}]} &= \Tr{[ \Omega^{\dagger} (\rho_{AB}) F_{AB}]}, \label{mean-swap-1}\\
 &= \Tr{[ \widetilde{\rho}_{AB}^{W(p,d)} F_{AB}]}, \label{mean-swap-2}\\
  &= \Tr{\lbrack\frac{2(1-p)}{d(d+1)} \Pi^{+}_{AB} (\Pi^{+}_{AB} -\Pi^{-}_{AB})\rbrack}  \nonumber \\ &+ \Tr{\lbrack \frac{2 p}{d(d-1)} \Pi^{-}_{AB} (\Pi^{+}_{AB} -\Pi^{-}_{AB})\rbrack},\label{mean-swap-3}\\
  &=  (1-p) \frac{2}{d(d+1)} ( \frac{d(d+1)}{2} - 0)  \nonumber\\ &+  p \frac{2}{d(d-1)} ( 0 - \frac{d(d+1)}{2}) \rbrace,\\
 &= (1-2p).
\end{align}
Here, $ \widetilde{\rho}_{AB}^{W(p,d)}$ represents the Werner state on Hilbert space with dimension $d$ parameterized by $p$. 
In Eq.~\eqref{mean-swap-1}, we have used Corollary~\ref{lemma-uni-op} and the fact that $F_{AB}$ is $(U\otimes U)$-invariant Hermitian operator. We have used Remark~\ref{werner-twirling} to obtain Eq.~\eqref{mean-swap-2}.
We have substituted the form of $\widetilde{\rho}_{AB}^{W(p,d)}$ from Eq.~\eqref{eqn-Werner} in Eq.~\eqref{mean-swap-3}.

\subsection{Mean of unnormalized bipartite maximally entangled operator}
\label{app-iso}
The mean value of an unnormalized maximally entangled operator ($\Gamma_{AB}$) for any bipartite state ($\rho_{AB}$), on the Hilbert space $\mc{D}(\mc{H}_{AB})$ with dimension $\dim(\mc{H}_A)=\dim(\mc{H}_B)=d$, can be calculated as follows
\begin{align}
\Tr{[\rho_{AB} \Gamma_{AB}]} &=  \Tr{[  \Omega^{\dagger} (\rho_{AB}) \Gamma_{AB}]},\label{mean-max-1}\\ 
 &=  \Tr{[ \widetilde{\rho}_{AB}^{I(p,d)} \Gamma_{AB}]}, \label{mean-max-2}\\
&=  \Tr{[t \Phi_{AB}^d + (1-t) \frac{\mathbbm{1}_{AB} - \Phi_{AB}^d}{d^2 -1} (d \Phi_{AB}^d)]},\label{mean-max-3}\\
&=   td \Tr{[(\Phi_{AB}^d)^2]}+ \frac{(1-t)d}{(d^2 -1)} (\Tr{[\Phi_{AB}^d]}\nonumber \\ &- \Tr{[(\Phi_{AB}^d)^2]}, \\
&=   td \label{mean-max-3},
\end{align}
where 
$t$ denotes a parameter of any bipartite isotropic state, $\widetilde{\rho}_{AB}^{I(p,d)}$, on the Hilbert space $\mc{D}(\mc{H}_{AB})$ with dimension $\dim(\mc{H}_A)=\dim(\mc{H}_B)=d$.
Eqs.~\eqref{mean-max-1} is achieved from Corollary~\ref{lemma-uni-op} and considering the fact that $\Gamma_{AB}$ is a $(U \otimes U^*)$-invariant Hermitian operator. We have obtained Eq.~\eqref{mean-max-2} using Remark~\ref{isotropic-twirling}.
We have put the form of $\widetilde{\rho}_{AB}^{I(p,d)}$ in Eq.~\eqref{mean-max-3} from Eq.~\eqref{eqn-isotropic}.
Note that $(\Phi_{AB}^d)^2 = \Phi_{AB}^d$, and $\Tr{[\Phi_{AB}^d]} =1$ are applied to obtain Eq.~\eqref{mean-max-3}.

\subsection{Evaluation of minimum mean energy over all $k$-extendible states of any $2N$-partite system with Hamiltonian $H^W_{\textnormal{gen}}$}
\label{app-wer-gen}
Let us suppose that all the $k$-extendible states of a $2N$-partite system on the Hilbert space $\mathcal{H}_{A_1} \otimes \mc{H}_{B_1} \otimes \cdots \otimes \mc{H}_{A_N} \otimes \mc{H}_{B_N}$ 
form the set $\mathrm{EXT}_k(A_1;B_1;...;B_N)$ and $\eta$ denotes a state belonging to the set $\mathrm{EXT}_k(A_1;B_1;...;B_N)$.
Now, we will find the minimum mean energy over the set $\mathrm{EXT}_k(A_1;B_1;...;B_N)$ for the observable $H^W_{\textnormal{gen}}$ and it will be denoted by $E^{\textnormal{min}}_{H^{W}_{\textnormal{gen}}}$. Therefore, we have
\begin{align}
&E^{\textnormal{min}}_{H^{W}_{\textnormal{gen}}} \coloneqq \min_{\eta \in \mathrm{EXT}_k(A_1;B_1;...;B_N)} \Tr{[\eta H^{W}_\textnormal{gen}]}, \\ 
&= \min_{\eta \in \mathrm{EXT}_k(A_1;B_1;...;B_N)} 
 \Tr\lbrack \eta ( \alpha_1 \mathbbm{1}_{A_1B_1......A_NB_N}  \nonumber\\ &+ \sum _{m=1}^{N} \alpha_2^{m} ( F_{A_mB_m} \otimes \mathbbm{1}_{\overline{AB}/A_m B_m} \nonumber\\ &+ F_{A_{m+1}B_m} \otimes \mathbbm{1}_{\overline{AB}/A_{m+1}B_m}))\rbrack , \\
&= \min_{\eta \in \mathrm{EXT}_k(A_1;B_1;...;B_N)} \lbrack \alpha_1 + \sum _{m=1}^{N} \alpha_2^{m} (1-2p_m) \nonumber \\ &+ \sum _{i=1}^{N} \alpha_2^{m} (1-2p'_m) \rbrack \label{mean-wer-gen-1}.
\end{align}

where we have taken the mathematical form of Hamiltonian $H^W_{\textnormal{gen}}$ from Eq.~\eqref{H-XXX-1}. Here, $\alpha_1$ and $\lbrace \alpha_2^m \rbrace_m$ are the parameters of the Hamiltonian $H^W_{\textnormal{gen}}$. Each and every $p_m $ and $p'_m$ for all $m$ represent a parameter of bipartite Werner state on $\mathbb{C}^2$. 
We have obtained Eq.~\eqref{mean-wer-gen-1} using Appendix~\ref{app-wer}.

\subsection{Gell-Mann matrices}
\label{gell-mat}
Eight Gell-Mann matrices are
\begin{align}
T_1 &=  \left( \begin{array}{ccc}
0 & 1 & 0 \\
1 & 0 & 0\\
0 & 0 & 0\end{array} \right),\quad  T_2 =  \left( \begin{array}{ccc}
0 & -i & 0 \\
i & 0 & 0 \\
0 & 0 & 0\end{array} \right),\nonumber \\
T_3 &= \left( \begin{array}{ccc}
1 & 0 & 0 \\
0 & -1 & 0\\
0 & 0 & 0 \end{array} \right), T_4 = \left( \begin{array}{ccc}
0 & 0 & 1 \\
0 & 0 & 0\\
1 & 0 & 0 \end{array} \right) \nonumber\\T_5 &= \left( \begin{array}{ccc}
0 & 0 & -i \\
0 & 0 & 0\\
i & 0 & 0 \end{array} \right), T_6 = \left( \begin{array}{ccc}
0 & 0 & 0 \\
0 & 0 & 1\\
0 & 1 & 0 \end{array} \right)\nonumber \\ T_7 &= \left( \begin{array}{ccc}
0 & 0 & 0 \\
0 & 0 & -i\\
0 & i & 0 \end{array} \right),T_8 = \frac{1}{\sqrt{3}} \left( \begin{array}{ccc}
1 & 0 & 0 \\
0 & 1 & 0\\
0 & 0 & -2 \end{array} \right).
\end{align}

\bibliography{qm}

\end{document}